\newtheorem{theorem}{Theorem}
\newcommand{\bX}{\ensuremath{{\mathbf{X}}}}
\newcommand{\bXh}{\ensuremath{{\widehat{\mathbf{X}}}}}
\newcommand{\bY}{\ensuremath{{\mathbf{Y}}}}
\newcommand{\bYh}{\ensuremath{{\widehat{\mathbf{Y}}}}}
\newcommand{\bYt}{\ensuremath{{\widetilde{\mathbf{Y}}}}}
\newcommand{\bZ}{\ensuremath{{\mathbf{Z}}}}
\newcommand{\bZh}{\ensuremath{{\widehat{\mathbf{Z}}}}}
\newcommand{\bM}{\ensuremath{{\mathbf{M}}}}
\newcommand{\bSig}{\ensuremath{{\mathbf{\Sigma}}}}
\newcommand{\bT}{\ensuremath{{\mathbf{T}}}}
\newcommand{\bE}{\ensuremath{{\mathbf{E}}}}
\newcommand{\bB}{\ensuremath{{\mathbf{B}}}}
\newcommand{\bDel}{\ensuremath{{\mathbf{\Delta}}}}
\newcommand{\bP}{\ensuremath{{\mathbf{P}}}}
\newcommand{\bPt}{\ensuremath{{\widetilde{\mathbf{P}}}}}
\newcommand{\bS}{\ensuremath{\mathbf{S}}}
\newcommand{\RNum}[1]{\uppercase\expandafter{\romannumeral #1\relax}}
\newcommand{\Appenproof}{\color{red}A\color{black}}
\newcommand{\Appentrainingdetail}{\color{red}B\color{black}}
\newcommand{\Appenresults}{\color{red}C\color{black}}
\def\@fnsymbol#1{\ensuremath{\ifcase#1\or \dagger\or \ddagger\or
   \mathsection\or \mathparagraph\or \|\or **\or \dagger\dagger
   \or \ddagger\ddagger \else\@ctrerr\fi}}
\crefname{section}{Sec.}{Secs.}
\Crefname{section}{Section}{Sections}
\Crefname{table}{Table}{Tables}
\crefname{table}{Tab.}{Tabs.}
\begin{document}

\title{Context-Based Trit-Plane Coding for Progressive Image Compression}

\author{Seungmin Jeon$^1$, Kwang Pyo Choi$^2$, Youngo Park$^2$, Chang-Su Kim$^{1}\thanks{Corresponding author.}$\\
$^1$Korea University, $^2$Samsung Electronics\\
{\tt\footnotesize seungminjeon@mcl.korea.ac.kr,
\tt\footnotesize \{kp5.choi, youngo.park\}@samsung.com,
\tt\footnotesize changsukim@korea.ac.kr}
}
\maketitle

\begin{abstract}
Trit-plane coding enables deep progressive image compression, but it cannot use autoregressive context models. In this paper, we propose the context-based trit-plane coding (CTC) algorithm to achieve progressive compression more compactly. First, we develop the context-based rate reduction module to estimate trit probabilities of latent elements accurately and thus encode the trit-planes compactly. Second, we develop the context-based distortion reduction module to refine partial latent tensors from the trit-planes and improve the reconstructed image quality. Third, we propose a retraining scheme for the decoder to attain better rate-distortion tradeoffs. Extensive experiments show that CTC outperforms the baseline trit-plane codec significantly, \textit{e.g.}~by $-14.84\%$ in BD-rate on the Kodak lossless dataset, while increasing the time complexity only marginally. The source codes are available at \href{https://github.com/seungminjeon-github/CTC}{https://github.com/seungminjeon-github/CTC}.
\end{abstract}
\vspace*{-0.4cm}

\section{Introduction}
\label{sec:intro}
Image compression is a fundamental problem in both image processing and low-level vision. A lot of traditional codecs have been developed, including standards  JPEG~\cite{y1992_CE_JPEG}, JPEG2000 \cite{y2001_SPM_JPEG2000}, and VVC \cite{y2021_TCSVT_Bross}. Many of these codecs are based on discrete cosine transform or wavelet transform. Using handcrafted modules, they provide decent rate-distortion (RD) results. However, with the rapidly growing usage of image data, it is still necessary to develop advanced image codecs with better RD performance.

Deep learning has been explored with the advance of big data analysis and computational power, and it also has been successfully adopted for image compression. Learning-based codecs have similar structures to traditional ones: they transform an image into latent variables and then encode those variables into a bitstream. They often adopt convolutional neural networks (CNNs) for the transformation. Several innovations have been made to improve RD performance, including differentiable quantization approximations \cite{y2017_ICLR_balle,y2017_NIPS_agustsson}, hyperprior \cite{y2018_ICLR_balle}, context models \cite{y2018_NIPS_minnen,y2018_CVPR_mentzer,y2021_CVPR_he}, and prior models \cite{y2020_CVPR_cheng,y2021_CVPR_cui}. As a result, the deep image codecs are competitive with or even superior to the traditional ones.

\begin{figure}[t]
    \begin{center}
    \includegraphics[width=\linewidth]{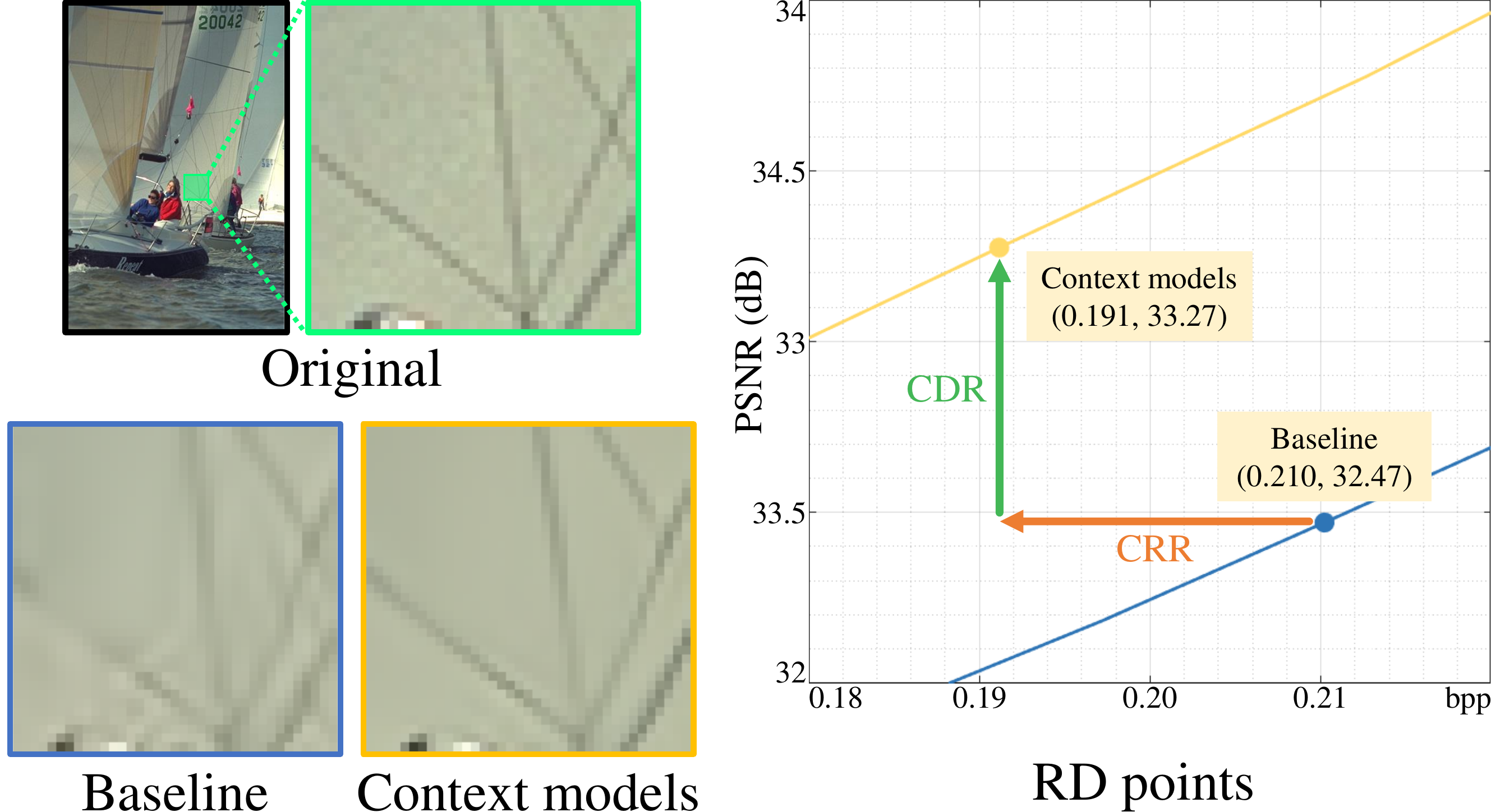}
    \end{center}
    \vspace*{-0.5cm}
    \caption
    {
       Illustration of the proposed context models: CRR reduces the bitrate, while CDR improves the image quality, as compared with the context-free baseline \cite{y2022_CVPR_lee}.
    }
    \label{fig:intro}
\end{figure}

It is desirable to compress images progressively in applications where a single bitstream should be used for multiple users with different bandwidths. But, relatively few deep codecs support such progressive compression or scalable coding \cite{y2005_IEEE_ohm}. Many codecs should train their networks multiple times to achieve compression at as many bitrates \cite{y2018_ICLR_balle,y2018_NIPS_minnen,y2020_CVPR_cheng,y2022_ICLR_zhu}. Some codecs support variable-rate coding \cite{y2021_CVPR_cui,y2021_CVPR_yang}, but they should generate multiple bitstreams for different bitrates. It is more efficient to truncate a single bitstream to satisfy different bitrate requirements. Lu \etal \cite{y2021_ICIP_lu} and Lee \etal \cite{y2022_CVPR_lee} are such progressive codecs, based on nested quantization and trit-plane coding, respectively. But, they cannot use existing context models \cite{y2018_NIPS_minnen,y2018_CVPR_mentzer,y2019_ICLR_lee,y2020_ICIP_minnen,y2021_CVPR_he}, which assume the synchronization of the latent elements, used as contexts, in the encoder and the decoder. Those latent elements are at different states depending on bitrates.

In this paper, we propose the context-based trit-plane coding (CTC) algorithm for progressive image compression, based on novel context models. First, we develop the context-based rate reduction (CRR) module, which entropy-encodes trit-planes more compactly by exploiting already decoded information. Second, we develop the context-based distortion reduction (CDR) module, which refines partial latent tensors after entropy decoding for higher-quality image reconstruction. Also, we propose a simple yet effective retraining scheme for the decoder to achieve better RD tradeoffs. It is demonstrated that CTC outperforms the existing progressive codecs \cite{y2021_ICIP_lu,y2022_CVPR_lee} significantly.

This paper has the following major contributions:
\begin{itemize}
\itemsep0em
\item We propose the \textit{first} context models, CRR and CDR, for deep progressive image compression. As illustrated in Figure \ref{fig:intro}, CRR reduces the bitrate, while CDR improves the image quality effectively, in  comparison with the baseline trit-plane coding \cite{y2022_CVPR_lee}.
\item We develop a decoder retraining scheme, which adapts the decoder to refined latent tensors by CDR to improve the RD performance greatly.
\item The proposed CTC algorithm outperforms the state-of-the-art progressive codecs \cite{y2021_ICIP_lu, y2022_CVPR_lee} significantly. Relative to     \cite{y2022_CVPR_lee}, CTC yields BD-rates of $-14.84\%$ on the Kodak dataset \cite{kodim}, $-14.75\%$ on the CLIC validation set \cite{clic2021}, and $-17.00\%$ on the JPEG-AI testset \cite{jpegai}.
\end{itemize}

\section{Related Work}
\label{sec:related}

\noindent
\textbf{Learning-based codecs:} Early learning-based image codecs \cite{y2016_NIPS_gregor,y2016_ICLR_toderici,y2017_CVPR_toderici,y2018_CVPR_johnston} are based on recurrent neural networks (RNNs), but more codecs \cite{y2017_ICLR_balle,y2017_ICLR_theis,y2018_ICLR_balle,y2018_NIPS_minnen,y2020_CVPR_cheng} employ CNN-based autoencoders \cite{y2008_ICML_vincent}. Ball{\'e} \etal \cite{y2017_ICLR_balle} proposed an additive noise model to approximate quantization and trained their network in an end-to-end manner. In \cite{y2018_ICLR_balle,y2018_NIPS_minnen}, hyperprior information was used to estimate the probability distributions of latent elements more accurately. Cheng \etal \cite{y2020_CVPR_cheng} used residual blocks and attention modules in the autoencoder and adopted a Gaussian mixture prior.

Recently, vision transformer \cite{y2021_ICLR_dosovitskiy} or self-attention has been adopted to yield better RD results \cite{y2022_ICLR_zhu,y2022_ICLR_qian,y2022_CVPR_zou,y2022_CVPR_kim}. Qian \etal \cite{y2022_ICLR_qian} developed transformer-based hyper-encoder and hyper-decoder. Kim \etal \cite{y2022_CVPR_kim} decomposed hyperprior parameters to global and local ones. Zou \etal \cite{y2022_CVPR_zou} used window attention modules in their CNN-based encoder and decoder. Zhu \etal \cite{y2022_ICLR_zhu} adopted the Swin transformer\cite{y2021_ICCV_liu} for their encoder, decoder, hyper-encoder and hyper-decoder.

\vspace{0.1cm}
\noindent
\textbf{Variable-rate compression:} The aforementioned codecs can compress an image at a single rate only. For variable-rate compression, they should be trained multiple times, which is inefficient in both time and memory. In contrast, there are several variable-rate codecs  \cite{y2017_ICLR_theis,y2019_ICCV_choi,y2021_CVPR_yang,y2021_CVPR_cui,y2021_ICCV_song}. Theis \etal \cite{y2017_ICLR_theis} and Choi \etal \cite{y2019_ICCV_choi} adopted scale parameters for quantization to achieve variable-rate coding. Yang \etal \cite{y2021_CVPR_yang} adopted the slimmable neural networks\cite{y2018_ICLR_yu} and used subsets of network parameters to control bitrates. Cui \etal \cite{y2021_CVPR_cui} proposed the gain unit for channel-wise bit allocation. Song \etal \cite{y2021_ICCV_song} utilized a pixelwise quality map for rate control. These  variable-rate codecs support multiple bitrates via single network training, but they still generate separate bitstreams at different bitrates.

\vspace{0.1cm}
\noindent
\textbf{Progressive compression:} A single bitstream can support multiple bitrates in progressive compression. For example, the traditional JPEG and JPEG2000 have optional progressive modes \cite{y1992_CE_JPEG,y2001_SPM_JPEG2000}. Most of learning-based progressive codecs are based on RNNs \cite{y2016_NIPS_gregor,y2016_ICLR_toderici,y2017_CVPR_toderici,y2018_CVPR_johnston}, which support a limited number of quality levels. Also, Cai \etal \cite{y2019_PCS_cai} supports only two quality levels with two decoders.

It is more desirable to offer fine granular scalability (FGS) \cite{y1996_TCSVT_said,y2001_TCSVT_li}: a single bitstream can be truncated at any point for the decoder to reconstruct an image. Lu \etal \cite{y2021_ICIP_lu} used nested quantization for FGS. Lee \etal \cite{y2022_CVPR_lee} proposed trit-plane coding and RD-prioritized transmission of trits. These FGS codecs yield comparable RD curves to conventional deep image codecs.

\vspace{0.1cm}
\noindent
\textbf{Context models:} As context-based entropy coding techniques such as CABAC \cite{y2003_TCSVT_marpe} are used in traditional codecs \cite{TextComp,y2003_TCSVT_wiegand}, context models are also employed in learning-based codecs \cite{y2018_NIPS_minnen,y2018_CVPR_mentzer,y2019_ICLR_lee,y2020_ICIP_minnen,y2021_CVPR_he}. Minnen \etal\cite{y2018_NIPS_minnen} and Mentzer \etal\cite{y2018_CVPR_mentzer} developed autoregressive context models using masked 2D and 3D CNNs, respectively. The autoregressive models exploit spatial contexts serially, demanding high time complexity. Lee \etal \cite{y2019_ICLR_lee} proposed bit-consuming and bit-free contexts to estimate latent distributions. Minnen \etal \cite{y2020_ICIP_minnen} explored a channelwise autoregressive model with latent residual prediction. He \etal \cite{y2021_CVPR_he} developed a checkerboard context model to reduce time complexity.

All these context models can be used for fixed-rate compression only. In contrast, we develop two context models, CRR and CDR, for progressive compression based on trit-plane coding, which improve the RD performance significantly with only a marginal increase of time complexity.

\begin{figure*}[h]
    \begin{center}
    \includegraphics[width=\linewidth]{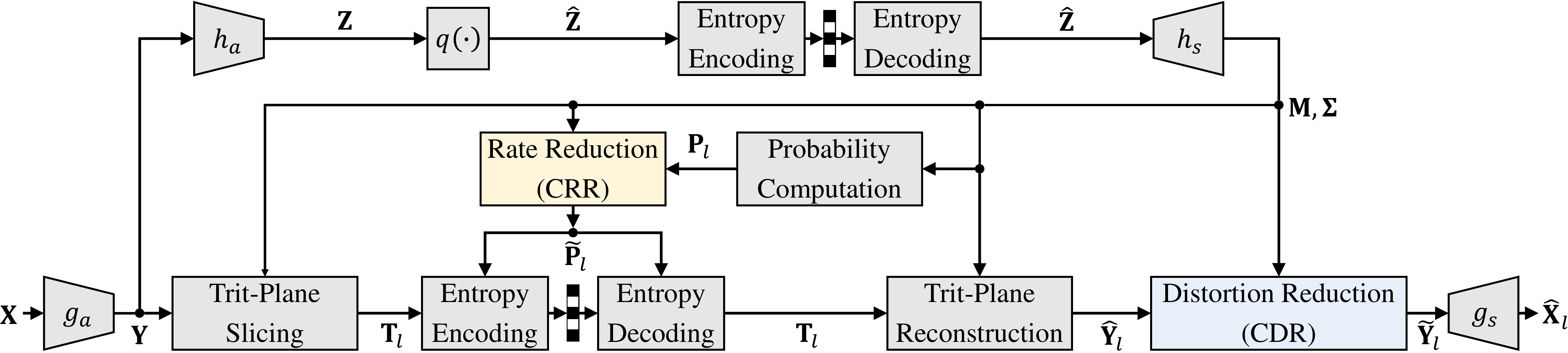}
    \end{center}
    \vspace*{-0.3cm}
    \caption
    {
        The framework of the proposed CTC algorithm. The context-based rate reduction (CRR) and context-based distortion reduction (CDR) modules are shown in detail in Figure~\ref{fig:context_based_modules}.
    }
    \label{fig:model_architecture}
\end{figure*}

\begin{figure}[t]
    \begin{center}
    \includegraphics[width=\linewidth]{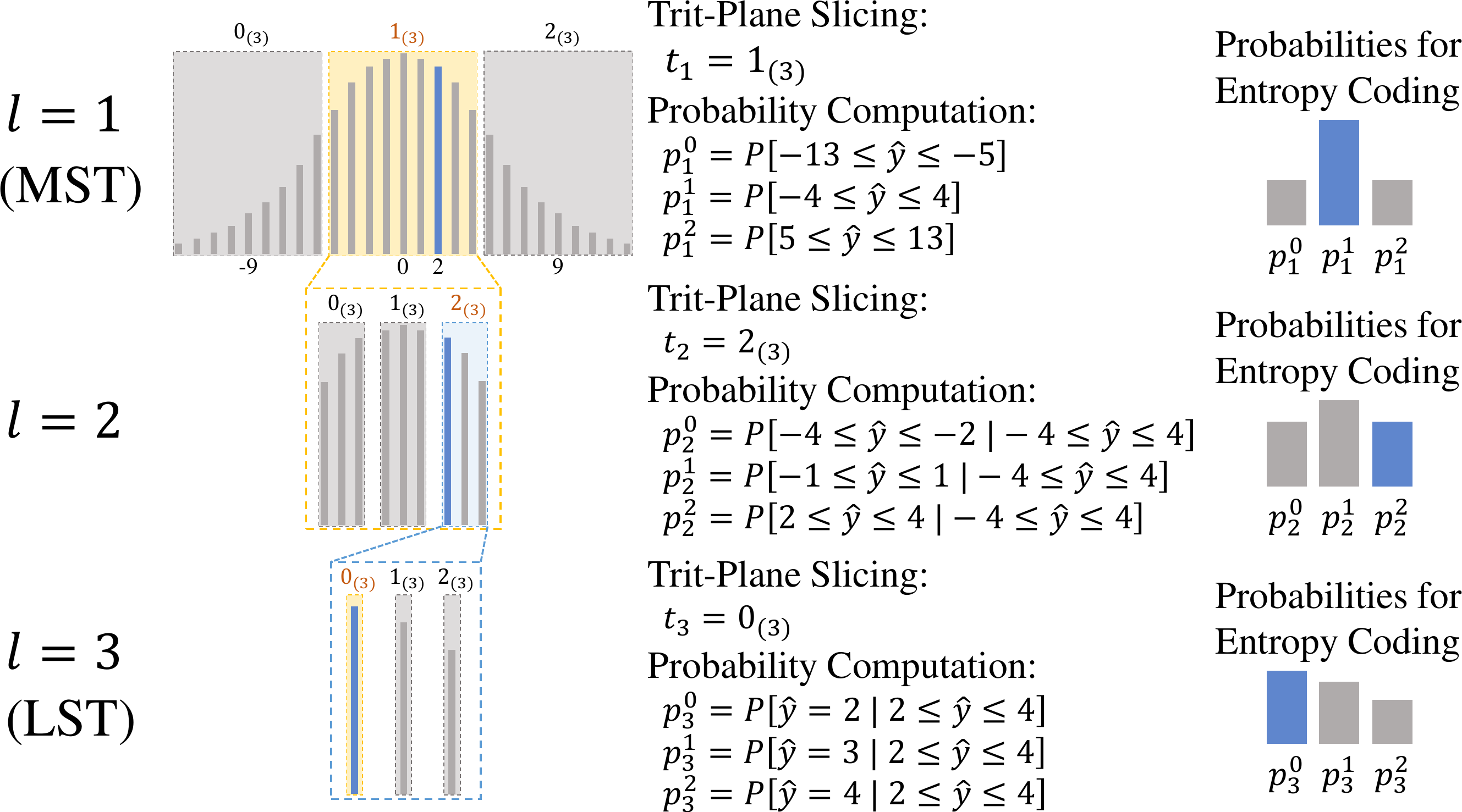}
    \end{center}
    \vspace*{-0.5cm}
    \caption
    {
       A toy example for trit-plane slicing and probability computation, where an element $\hat{y}$ in $\bYh$ equals $2$. In trit-plane slicing, we determine where the element belongs among three equal subintervals. In probability computation,  we compute the conditional probabilities of each trit. These two steps of trit-plane slicing and probability computation are carried out recursively from MST to LST.
    }
    \label{fig:toy_example}
\end{figure}

\section{Proposed Algorithm}
\label{sec:proposed}

\subsection{Trit-Plane Coding}
\label{ssec:backgrounds}

Trit-plane coding was introduced in \cite{y2022_CVPR_lee} for deep progressive image compression. Figure~\ref{fig:model_architecture} shows the framework of the proposed CTC algorithm, which is also based on the trit-plane representation of latent elements. The encoder $g_a$ and the hyper-encoder $h_a$ transform an image $\bX$ into a latent tensor $\bY$ and a hyper latent tensor $\bZ$ sequentially. Then, using the quantized $\bZh$, the hyper-decoder $h_s$ yields $\bM$ and $\bSig$, representing the mean and standard deviation of $\bY$.

For trit-plane coding, we express the centered and quantized latent tensor $\bYh=q(\bY-\bM)$ in a ternary number system through the trit-plane slicing module: $\bYh \in \mathbb{R}^{C \times H \times W}$ is sliced into $L$ trit-planes $\bT_{l}$, $l = 1, \ldots, L$. Each trit-plane is a tensor of the same size as $\bYh$. Also, $\bT_1$ is the most significant trit-plane (MST), while $\bT_L$ is the least significant one (LST). The trit-planes are entropy-encoded into a bitstream progressively from MST to LST. To entropy-encode the $l$th trit-plane $\bT_{l}$, we compute the probability tensor $\bP_l$ containing the probabilities that each trit in $\bT_{l}$ equals $0_{(3)}$, $1_{(3)}$, or $2_{(3)}$.\footnote{The subscripts $(3)$, indicating the ternary number system, are omitted in the remaining paper for notational convenience.} Thus, $\bP_l \in \mathbb{R}^{3C \times H \times W}$. In Figure~\ref{fig:model_architecture}, the probability computation module estimates $\mathbf{P}_l$ by employing the entropy parameters $\bM$, $\bSig$ and the already encoded trit-planes $\bT_{1:l-1}$. Before the entropy coding, CTC refines $\bP_l$ to $\bPt_l$ using the CRR module. Then, the trits in $\bT_{l}$ are encoded into a bitstream in the decreasing order of their RD priorities \cite{y2022_CVPR_lee}. Figure~\ref{fig:toy_example} is a toy example for trit-plane slicing and probability computation.

Conversely, at the decoder side, the trit-planes are entropy-decoded from the bitstream. At any point of the entropy decoding, the image can be reconstructed. Assume that only the first $l$ trit-planes are decoded. Here, $l$ can be a fractional number. For example, if $l=2.31$, $\bT_{1}$ and $\bT_{2}$ are fully decoded, while 31\% of trits in $\bT_{3}$ are decoded. Then, the trit-plane reconstruction module obtains the partial latent tensor $\bYh_{l}$ from the $l$ trit-planes. Specifically, let $y$ be a latent element. Using the available trits, the decoder first identifies the interval ${\cal I}$ where $y$ belongs and then reconstructs it to the conditional mean, given by
\begin{equation}
\hat{y}_l = E[y|y \in {\cal I}].
\label{eq:reconstruction_level}
\end{equation}
Finally, the CDR module reduces distortions in $\bYh_l$ to yield $\bYt_l$, and the decoder $g_s$ reconstructs the image $\bXh_l$ from the refined latent tensor $\bYt_l$.

\subsection{Context-Based Rate Reduction}
\label{ssec:encoding}

Context models are useful for compressing correlated signals efficiently \cite{TextComp}. In the learning-based codecs, an autoregressive context model \cite{y2018_NIPS_minnen} predicts the entropy parameters of a latent element using already encoded elements and it improves the RD performance significantly. However, it is impossible to use the autoregressive model for trit-plane coding. The model assumes that $C\times H \times W$ latent elements are coded in the same raster scan order by both the encoder and the decoder. Hence, when trit-planes are only partially reconstructed, the decoder cannot perform the same prediction as the encoder, so the decoding breaks down \cite{y2022_CVPR_lee}.

\begin{figure*}[h]
    \begin{center}
    \includegraphics[width=\linewidth]{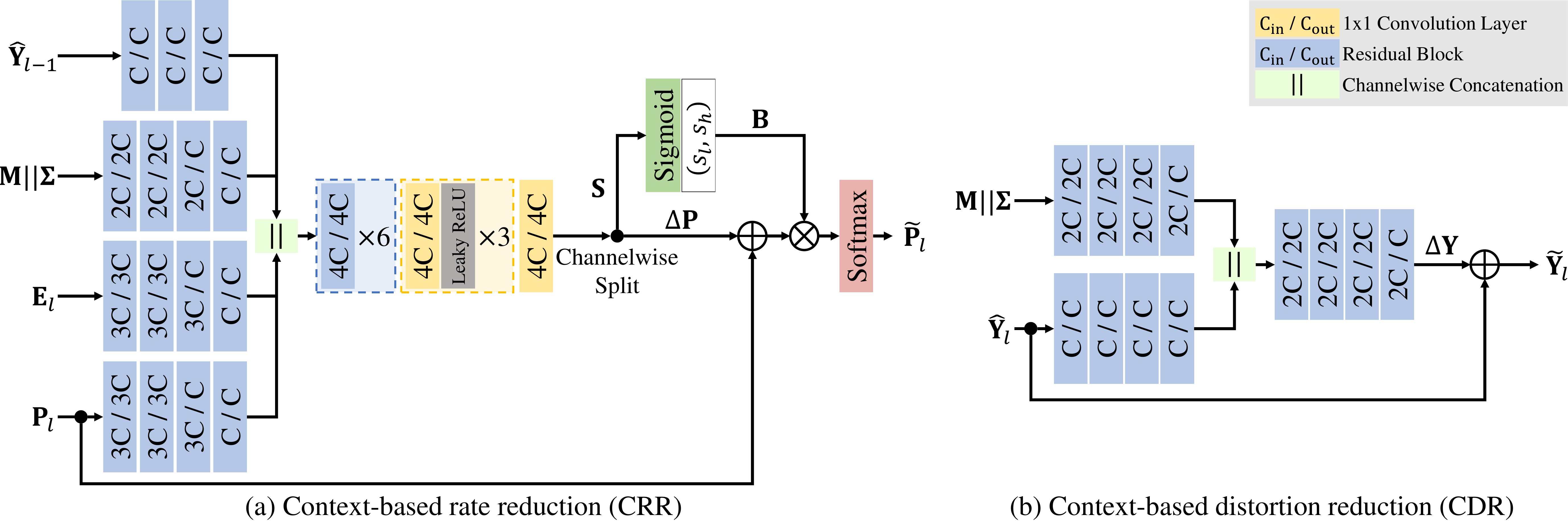}
    \end{center}
    \vspace*{-0.5cm}
    \caption
    {
        The architecture of the (a) CRR and (b) CDR modules. Each convolution layer has stride 1 and performs zero padding.
    }
    \vspace*{-0.1cm}
    \label{fig:context_based_modules}
\end{figure*}

\begin{figure}[t]
    \begin{center}
    \includegraphics[width=\linewidth]{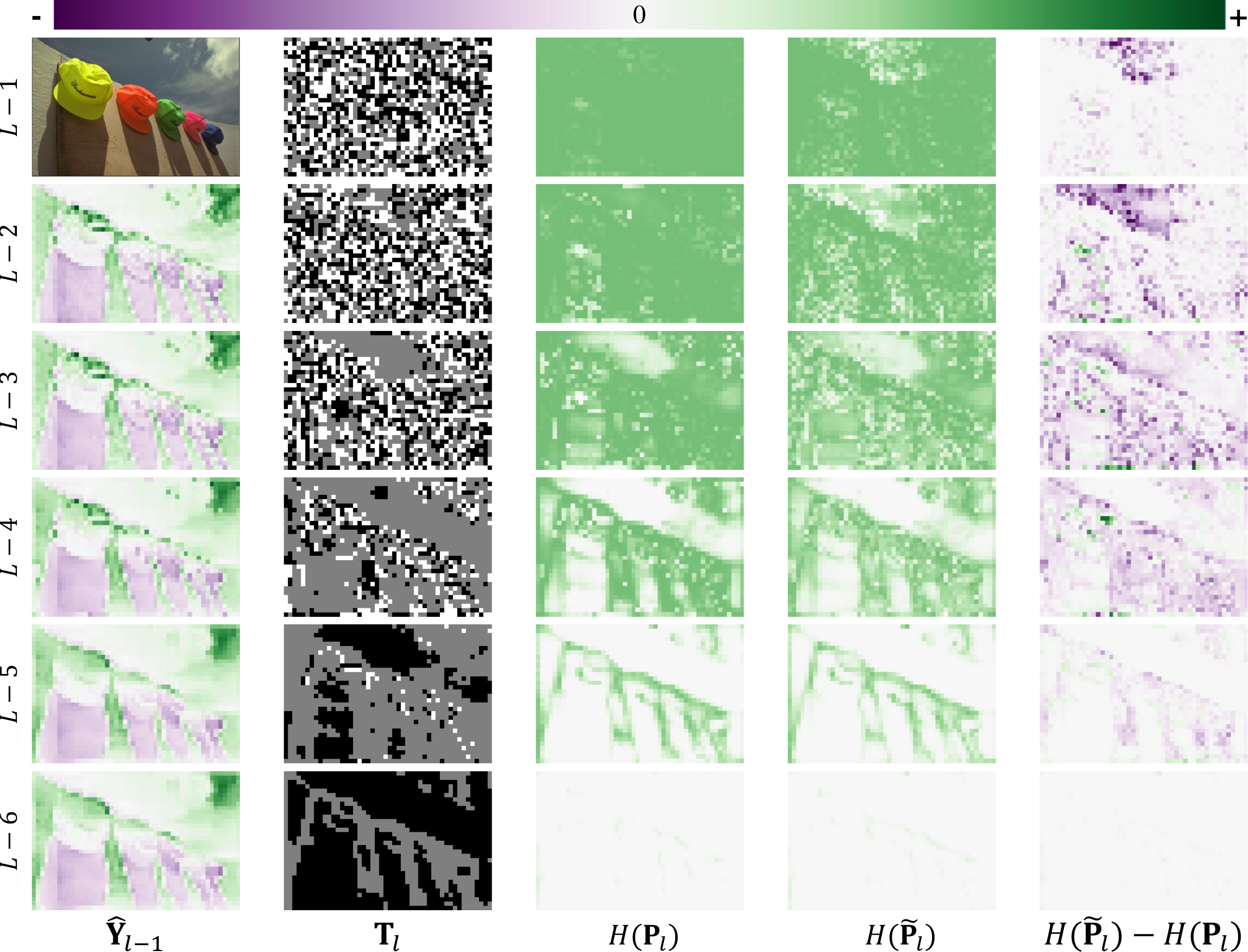}
    \end{center}
    \vspace*{-0.5cm}
    \caption
    {
        Visualization of $\bYh_{l-1}$, $\bT_l$, the entropies of $\bP_l$ and  $\bPt_l$, and the residual $H(\bPt_l)-H(\bP_l)$ in the 141st channel. The top left subfigure is, however, the original image for reference. In $\bT_l$, ternary values 0, 1, and 2 are shown in black, gray, and white. In the other cases, green and purple represent positive and negative values, as shown in the top color bar.
    }
    \label{fig:feature_visualization}
\end{figure}

We propose the first context models for trit-plane coding. Instead of predicting latent elements in the raster scan order, we predict each trit-plane $\bT_{l}$, $l=1, \ldots, L$, by exploiting already coded information, including the more significant trit-planes $\bT_{1:l-1}$. Note that the probability tensor $\bP_l$ is used to encode $\bT_l$. We refine the probability estimates in $\bP_l$ to yield an updated tensor $\bPt_{l}$ using the CRR module in Figure~\ref{fig:context_based_modules}(a). $\bPt_{l}$ requires fewer bits during the entropy coding than $\bP_l$ does, improving the RD performance.

To this end, we use already coded information: First, the approximate latent tensor $\bYh_{l-1}$, reconstructed from $\bT_{1:l-1}$, provides a context. Second, the entropy parameters $\bM$ and $\bSig$ are concatenated and used as another context. Third, the expected latent tensor $\bE_l$ is also used. Assuming that each trit in $\bT_{l}$ equals $0$, $1$, or $2$, the expected value of the corresponding latent element is computed via \eqref{eq:reconstruction_level}. $\bE_l$ contains the three possible values of every latent element, so $\bE_l \in \mathbb{R}^{3C \times H \times W}$.

In Figure \ref{fig:context_based_modules}(a), CRR extracts features from the input $\bP_l$ and the three contexts separately and fuses them lately through residual blocks and convolution layers. The fused tensor has the same spatial resolution as $\bP_l$ does, but four times more channels. It is split channelwise into an additive term $\bDel\!\bP \in \mathbb{R}^{3C \times H \times W}$ and a scaling term $\bS \in \mathbb{R}^{C \times H \times W}$. First, $\bS$ is converted into $\bB$ by
 \begin{equation}
\bB=s_l + (s_h - s_l) \times \textrm{sigmoid}(\bS),
\label{eq:beta_limit}
\end{equation}
whose each element is within $(s_l, s_h)$. Then, $\bP_l$ is added to $\bDel\!\bP$, and the sum is modulated by $\bB$ to yield an updated probability tensor $\bPt_l$. More specifically, let $\{x_0, x_1, x_2\}$ and $\beta$ be the elements in $(\bP_l + \bDel\!\bP)$ and $\mathbf{B}$, respectively,  corresponding to a trit in $\bT_l$. Then, the corresponding updated probabilities $\{\tilde{p}_0, \tilde{p}_1, \tilde{p}_2\}$ in $\bPt_l$ are determined using the softmax function,
\begin{equation}
\tilde{p}_i=\frac{e^{\beta x_i}}{\sum_{j=0}^{2}e^{\beta x_j}}, \quad \quad i=0, 1, 2.
\label{eq:beta_limit}
\end{equation}
Intuitively, a high $\beta$ sharpens the probability mass function around the largest input, whereas a low $\beta$ flattens it. It is proven in Appendix \Appenproof \ that the entropy $H(\{\tilde{p}_0, \tilde{p}_1, \tilde{p}_2\})$ is a monotonic decreasing function of $\beta$. Thus, to reduce the entropy, we should set a large $\beta$. However, the number of required bits is not the ordinary entropy but the cross-entropy
\begin{equation}
\ell_{\textrm{CRR}} = - \sum_{i=0}^{2} q_i \log_2 \tilde{p}_i,
\label{eq:ce_loss}
\end{equation}
where $\{q_0, q_1, q_2\}$ is the ground-truth one-hot vector for the trit. If the trit corresponds to a highly complicated image region, its probabilities are hard to predict. In such a case, it is beneficial to flatten $\{\tilde{p}_0, \tilde{p}_1, \tilde{p}_2\}$ with a small $\beta$ and thus to reduce $\ell_{\textrm{CRR}}$ in \eqref{eq:ce_loss} on average.

We train CRR to minimize the sum of the cross-entropies in \eqref{eq:ce_loss} for all trits in $\bT_l$. In other words, CRR is learned to modify the input probabilities in $\bP_l$ with the additive term $\bDel\!\bP$ and then flatten or sharpen the resulting probabilities with the modulating term $\bB$, so the output probabilities in $\bPt_l$ minimize the length of the bitstream.

Figure \ref{fig:feature_visualization} shows that there are spatial redundancies in $\bY$. Hence, neighboring trits in $\bT_l$ are also correlated. Even though $\bT_l$ for a large $l$ contains more random trits, as indicated by their high entropies in $H(\bP_l)$, CRR refines their probability estimates and reduces the entropies in $H(\bPt_l)$. The entropy reduction is observed especially in simple regions, such as sky and shadow, as shown in the last column.

It is worth pointing out that CRR can be regarded as a ternary classifier, trained with the cross-entropy loss in \eqref{eq:ce_loss}, that uses the contexts to classify each trit in $\bT_l$ into one of the three classes $0$, $1$, or $2$.

\subsection{Context-Based Distortion Reduction}
\label{ssec:decoding}

CRR in Section~\ref{ssec:encoding}, as well as existing context models \cite{y2018_NIPS_minnen, y2018_CVPR_mentzer,y2021_CVPR_he}, aims to reduce the required bits for latent elements by predicting their probabilities more accurately. All these context models are used \textit{before} entropy encoding. In contrast, we propose another context model, CDR, that is used \textit{after} entropy decoding. Unlike non-progressive codecs, the proposed algorithm can use a partial latent tensor $\bYh_l$, for any $0< l \leq L$, to reconstruct the image $\bXh_l$. Thus, after decoding $\bYh_l$, which is a truncated approximation of $\bY$, CDR tries to reduce the error $\|\bY-\bYh_l\|_F$ using contexts, thereby reducing the image distortion $\|\bX - \bXh_l\|_F$ as well. Here, $\| \cdot \|_F$ denotes the Frobenius norm.

Figure \ref{fig:context_based_modules}(b) shows the architecture of CDR. Using $\bM$ and $\bSig$ as the contexts, CDR refines the partial latent tensor $\bYh_l$ into $\bYt_l$. It regresses the residual $\bDel\!\bY$ and yields the sum
\begin{equation}
\bYt_l=\bYh_l+\bDel\!\bY
\end{equation}
as the refined tensor. Note that, different from CRR, CDR does not use $\bE_l$ and $\bP_l$ as contexts, for they contain probabilistic information about $\bT_l$. Since $\bT_l$ is already decoded and used to reconstruct $\bYh_l$, $\bE_l$ and $\bP_l$ hardly provide additional information not included in $\bYh_l$. Also, note that CDR is a regressor for reducing the distortion, whereas CRR is a classifier for reducing the bitrate.

The CDR module is trained to minimize the loss
\begin{equation}
\ell_{\textrm{CDR}} = \| \bY-\bYt_l \|_F.
\label{eq:cdr_loss}
\end{equation}
For example, Figure \ref{fig:latent_refinement}(a) shows the reconstructed images from partial latent tensors $\bYh_{L-2}$, with noticeable compression artifacts. In contrast, Figure \ref{fig:latent_refinement}(b) is the reconstruction from the refined tensors $\bYt_{L-2}$ by CDR, in which the artifacts are alleviated.

\begin{figure}[t]
    \begin{center}
    \includegraphics[width=\linewidth]{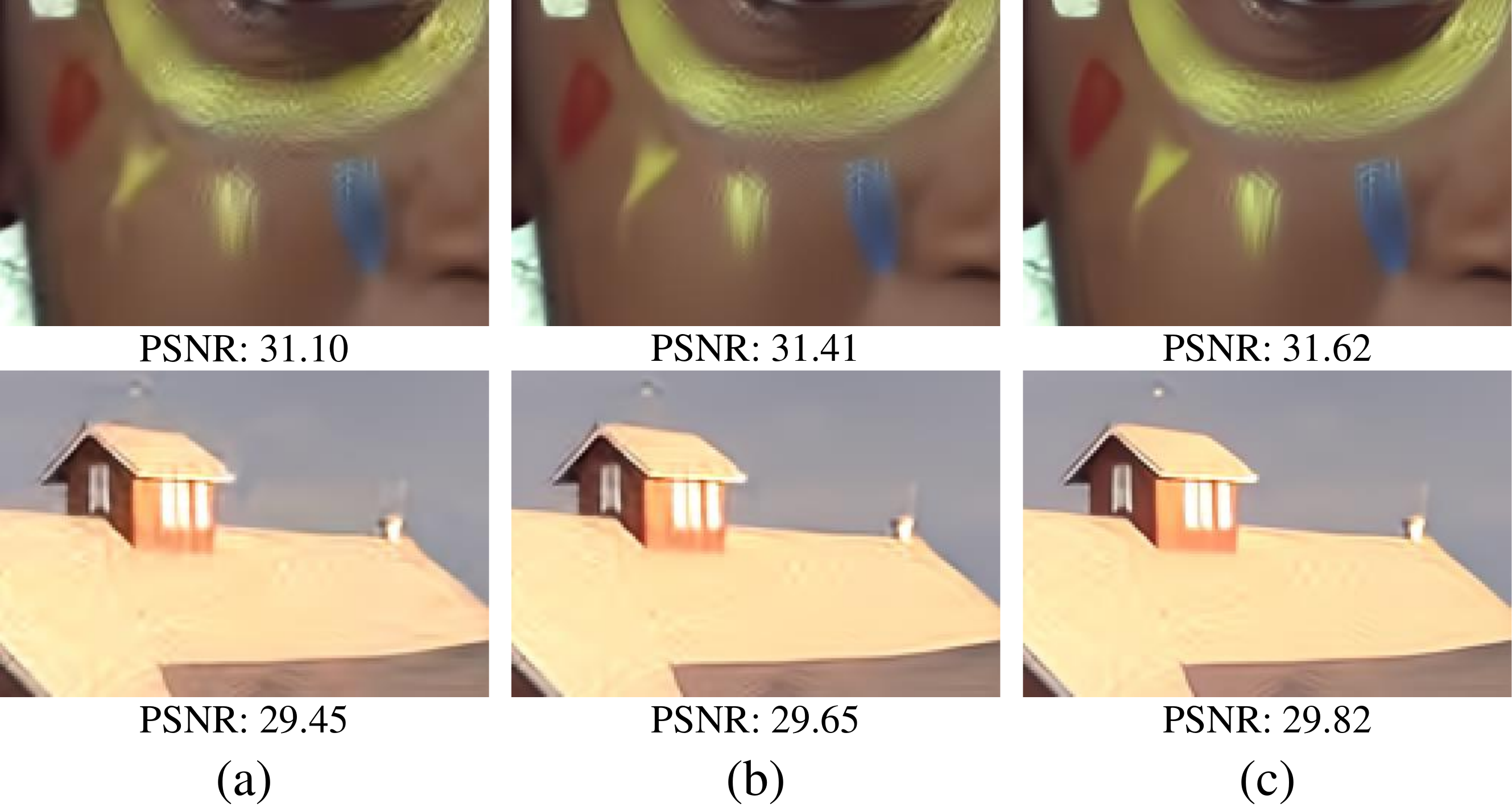}
    \end{center}
    \vspace*{-0.5cm}
    \caption
    {
        Reconstructed images (a) from partial latent tensors $\bYh_{L-2}$, (b) from refined latent tensors $\bYt_{L-2}$ and
        (c) from $\bYt_{L-2}$  with the retrained decoder.
    }
    \label{fig:latent_refinement}
\end{figure}

\subsection{Decoder Retraining}
\label{ssec:decoder_retraining}

In trit-plane coding, both the encoder and the decoder are trained for a fixed point in the RD curve (usually a high-rate, low-distortion point), and a resultant latent tensor $\bY$ is sliced into trit-planes for progressive compression \cite{y2022_CVPR_lee}. We also adopt this strategy to first train the encoder $g_a$, the hyper-encoder $h_a$, the decoder $g_s$, and the hyper-decoder $h_s$ in Figure~\ref{fig:model_architecture}. Then, we obtain $\bY$ and truncate it to various versions $\bYh_l$, $0 < l \leq L$. Using these partial tensors $\bYh_l$, we train the CRR and CDR modules, respectively, to reduce the required bitrates and the distortions by minimizing the losses in \eqref{eq:ce_loss} and \eqref{eq:cdr_loss}.

In Figure~\ref{fig:model_architecture}, trit-plane slicing and reconstruction are not differentiable, so CRR and CDR, which process trit-planes $\bT_l$ and partial tensors $\bYh_l$, cannot be trained jointly with $g_a$, $h_a$, $g_s$, and $h_s$ in an end-to-end manner. Hence, we adopt the sequential training scheme.

CDR refines $\bYh_l$ into $\bYt_l$, which is used as the new input to the decoder $g_s$. Thus, we retrain $g_s$ to further improve the quality of the reconstructed image $\bXh_l$. Specifically, we generate $\bYt_l$ for various $l$ and retrain the decoder $g_s$ to minimize
 \begin{equation}
\ell_{\textrm{DEC}} = \sum_l w_l \times \| g_s(\bYt_l) - \mathbf{X} \|_F,
\label{eq:loss_dec}
\end{equation}
where $w_l$ is a weighting parameter for each significance level $l$. The retraining improves the reconstruction quality, as illustrated in Figure~\ref{fig:latent_refinement}(c).

\section{Experiments}
\label{sec:experiments}

\subsection{Implementation and Evaluation}
\label{ssec:implementation}
We implement the proposed CTC algorithm based on the Cheng \etal's network \cite{y2020_CVPR_cheng}, composed of residual blocks and attention modules. However, we eliminate the autoregressive model and instead adopt CRR and CDR to exploit contexts. Also, we employ the unimodal Gaussian prior, rather than the Gaussian mixture model in \cite{y2020_CVPR_cheng}, to simplify the latent reconstruction in~\eqref{eq:reconstruction_level} and the computation of $\bP_l$. We use the ANS coder \cite{y2013_arXiv_duda_ANS} for the entropy coding.

\begin{figure*}[h]
    \begin{center}
    \includegraphics[width=\linewidth, height=5.5cm]{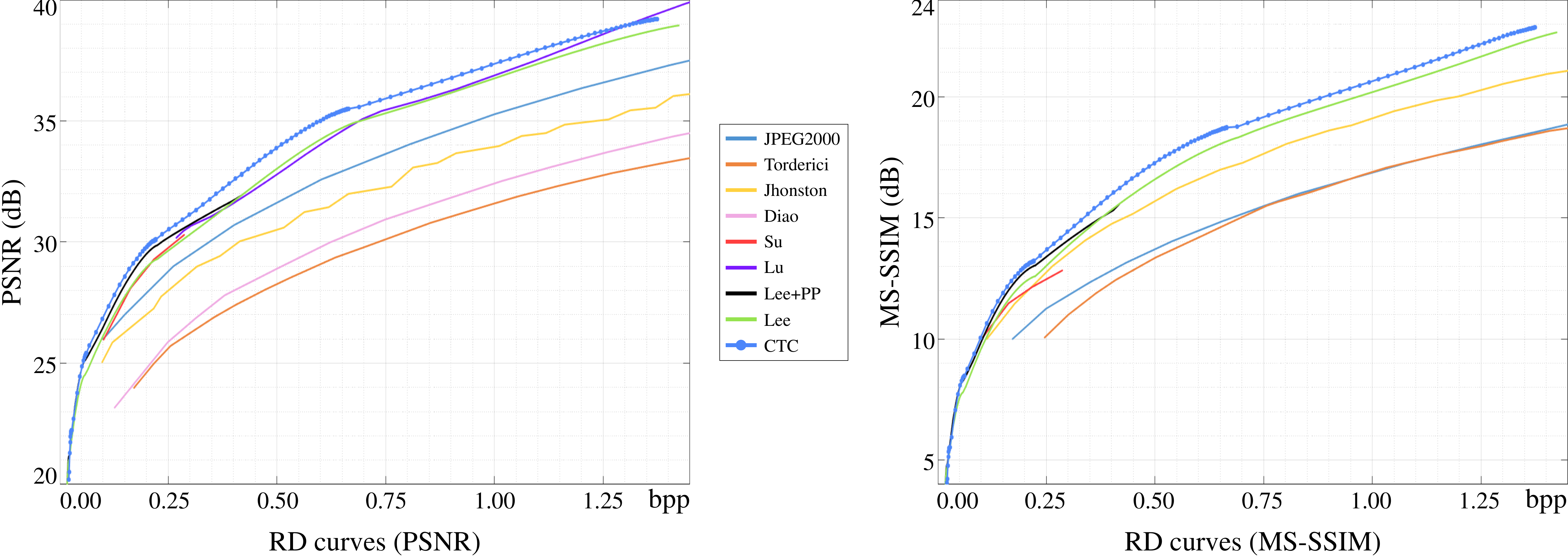}
    \end{center}
    \vspace*{-0.5cm}
    \caption
    {
        RD curve comparison of the proposed CTC algorithm with existing \textit{progressive} codecs on the Kodak lossless dataset: Torderici \etal \cite{y2017_CVPR_toderici}, Jhonston \etal \cite{y2018_CVPR_johnston}, Diao \etal \cite{y2020_DOC_diao}, Su \etal \cite{y2020_ICIP_su}, Lu \etal \cite{y2021_ICIP_lu} and Lee \etal \cite{y2022_CVPR_lee}. `+PP' means that the postprocessing networks are used to improve Lee \etal.
        The performance of JPEG2000 is measured in the default non-progressive mode to be used as the same benchmark in both this figure and Figure~\ref{fig:rdcurves_nonprogressive_kodak}.
    }
    \vspace*{-0.1cm}
    \label{fig:rdcurves_progressive_kodak}
\end{figure*}

\begin{figure}[t]
    \begin{center}
    \includegraphics[width=\linewidth,height=5.5cm]{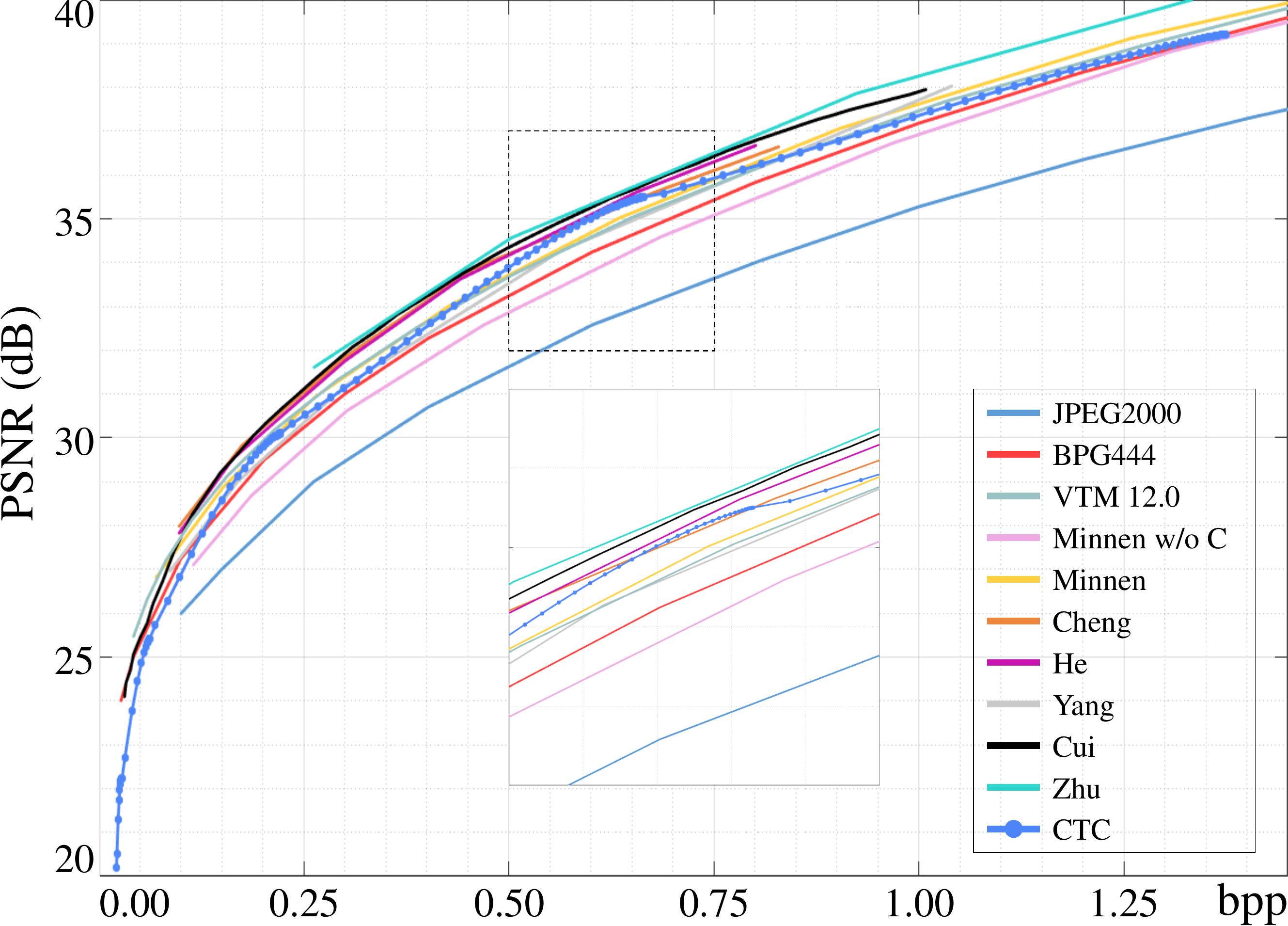}
    \end{center}
    \vspace*{-0.5cm}
    \caption
    {
        RD curve comparison of CTC with existing \textit{non-progressive} codecs on Kodak: JPEG2000 \cite{codec_jpeg2000}, BPG444 \cite{bpg}, VTM 12.0 \cite{y2021_TCSVT_Bross}, Minnen \etal \cite{y2018_NIPS_minnen}, Cheng \etal \cite{y2020_CVPR_cheng}, He~\etal \cite{y2021_CVPR_he}, Yang \etal \cite{y2021_CVPR_yang}, Cui \etal \cite{y2021_CVPR_cui}, and Zhu \etal \cite{y2022_ICLR_zhu}.
    }
    \label{fig:rdcurves_nonprogressive_kodak}
\end{figure}

To train CTC, we sample frames from the Vimeo-90K dataset \cite{y2019_IJCV_xue} and crop $256 \times 256$ patches from each frame as input. We use the Adam optimizer \cite{y2015_ICLR_kingma} with a batch size of 8 and set a learning rate of $10^{-4}$ with cosine annealing cycles \cite{y2017_ICLR_huang}. First, we train $g_a$, $h_a$, $g_s$, and $h_s$ by approximating the quantizer $q(\cdot)$ in Figure~\ref{fig:model_architecture} with the additive uniform noise model \cite{y2017_ICLR_balle}. Second, we train three sets of CRR and CDR, respectively, for different intervals of $l$. Third, we retrain the decoder $g_s$ to minimize the loss in \eqref{eq:loss_dec}. More implementation and training details are in Appendix~\Appentrainingdetail.

For evaluation, we use the Kodak lossless dataset \cite{kodim}, the CLIC professional validation dataset \cite{clic2021}, and the JPEG-AI testset\cite{jpegai}. Kodak consists of 24 images of resolution $512 \times 768$ or $768 \times 512$, while CLIC and JPEG-AI contain 41 and 16 images of up to 2K resolution. We report bitrates in bits per pixel (bpp) and measure image qualities in PSNR and MS-SSIM~\cite{y2003_ACS_wang_MS_SSIM}. For MS-SSIM, we present decibel scores by $\textrm{MS-SSIM (dB)} = - 10 \cdot \log_{10} (1 - \textrm{MS-SSIM})$. Also, we compare the compression performances of two algorithms using the BD-rate metric \cite{bdrate_excel}.

\subsection{Performance Comparison}
\label{ssec:performance_comparison}
\noindent
\textbf{RD curves:} We compare the proposed CTC algorithm with traditional BPG444 \cite{bpg}, VTM 12.0 \cite{y2021_TCSVT_Bross} and learning-based codecs in \cite{y2017_CVPR_toderici,y2018_CVPR_johnston,y2020_DOC_diao,y2020_ICIP_su,y2018_NIPS_minnen,y2020_CVPR_cheng,y2021_CVPR_he,y2021_CVPR_cui,y2021_CVPR_yang,y2021_ICIP_lu,y2022_ICLR_zhu,y2022_CVPR_lee}.

\begin{table}
    \caption
        {
            BD-rate performances (\%) with respect to Lee \etal\cite{y2022_CVPR_lee}.
        }
    \centering
    \footnotesize
    \vspace*{-0.2cm}
    \begin{tabular}{@{} l | l | r r r@{} }
        \toprule
        \multicolumn{2}{c}{} & Kodak & CLIC & JPEG-AI \\
        \midrule

        & JPEG2000~\cite{codec_jpeg2000} & $31.19$ & $48.54$ & $37.46$ \\
        Traditional & BPG444~\cite{bpg} & $-12.16$ & $-1.25$ & $-7.95$ \\
        & VTM 12.0~\cite{y2021_TCSVT_Bross} & $-13.44$ & $-8.75$ & $-14.58$ \\

        \midrule

        & Minnen w/o C~\cite{y2018_NIPS_minnen} & $-8.61$ & $-0.90$ & $-4.12$ \\
        Fixed-rate & Minnen \etal~\cite{y2018_NIPS_minnen} & $-16.65$ & $-11.11$ & $-13.92$ \\
        & Cheng~\etal~\cite{y2020_CVPR_cheng} & $-23.99$ & $-15.99$ & - \quad\quad \\

        \midrule

        & Lu~\etal~\cite{y2021_ICIP_lu} & $-0.61$ & - \quad\quad & $2.91$ \\
        FGS & Lee~\etal+PP \cite{y2022_CVPR_lee} & $-6.84$ & $-6.87$ & $-7.19$ \\
        & CTC & $-14.84$ & $-14.75$ & $-17.00$ \\
        \bottomrule
    \end{tabular}
\label{table:bdrates}
\end{table}

\begin{table}
    \caption
        {
            Complexity comparison of CTC with Minnen \etal \cite{y2018_NIPS_minnen} and Lee \etal \cite{y2022_CVPR_lee}. The average encoding and decoding times for a single image in the Kodak lossless dataset are reported.
        }
    \centering
    \footnotesize
    \vspace*{-0.2cm}
    \begin{tabular}{@{} l | c c c@{} }
        \toprule
        & \# Parameters & Encoding (s) & Decoding (s) \\
        \midrule
        Minnen~\etal~\cite{y2018_NIPS_minnen} & 30.6M & 4.01 & 11.02 \\
        Lee~\etal(+PP) \cite{y2022_CVPR_lee}  & 27.2M (+50M) & 1.73 & 1.40 (+0.10)\\
        CTC  & 39.9M & 1.78 & 1.55 \\
        \bottomrule
    \end{tabular}
\label{table:times}
\end{table}

\begin{figure*}[h]
    \begin{center}
    \includegraphics[width=\linewidth]{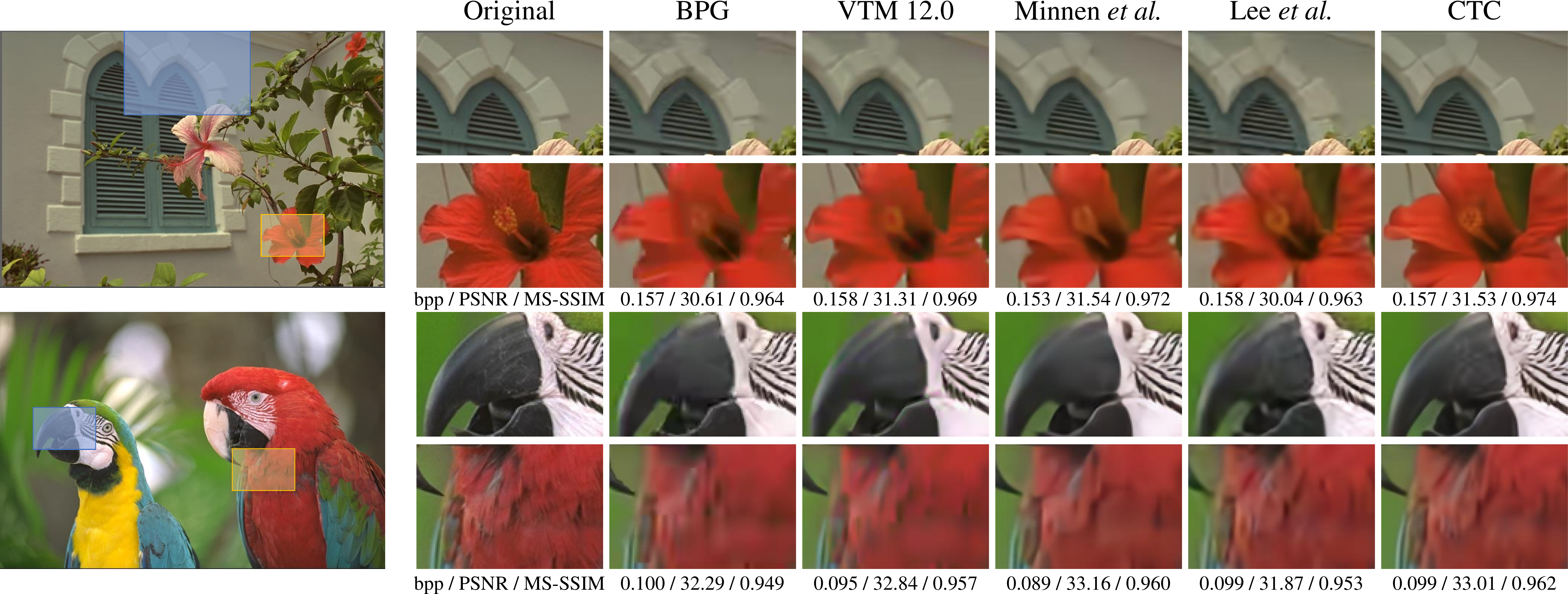}
    \end{center}
    \vspace*{-0.5cm}
    \caption
    {
        Comparison of reconstructed images by different codecs: BPG~\cite{bpg}, VTM 12.0~\cite{y2021_TCSVT_Bross}, Minnen~\etal~\cite{y2018_NIPS_minnen}, Lee~\etal~\cite{y2022_CVPR_lee} and CTC.
    }
    \vspace*{-0.1cm}
    \label{fig:qualitative_results_withcodecs}
\end{figure*}

\begin{figure*}[h]
    \begin{center}
    \includegraphics[width=\linewidth]{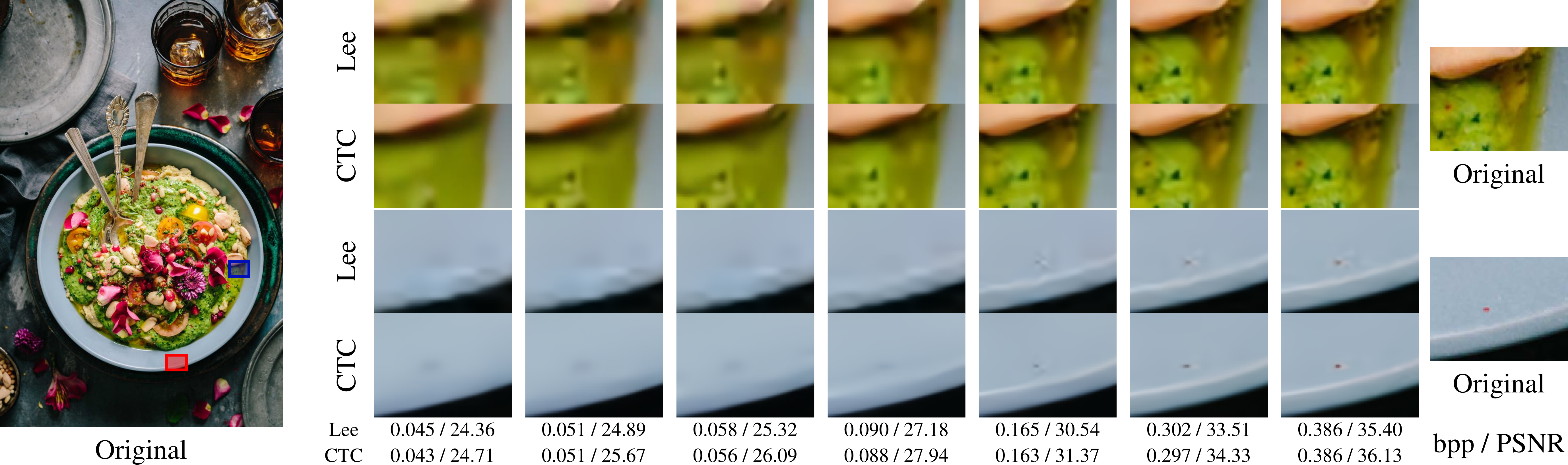}
    \end{center}
    \vspace*{-0.5cm}
    \caption
    {
        Qualitative comparison of progressive reconstruction results by Lee~\etal~\cite{y2022_CVPR_lee} and CTC. The bitrates (bpp) and PSNRs (dB) for the entire image are also listed in the corresponding columns.
    }
    \vspace*{-0.1cm}
    \label{fig:qualitative_results_progressive}
\end{figure*}

Figure~\ref{fig:rdcurves_progressive_kodak} compares the RD curves of CTC with those of progressive codecs on the Kodak lossless dataset. CTC outperforms all conventional codecs with meaningful gaps in both PSNR and MS-SSIM at a wide range of bitrates. For example, at 0.5bpp, CTC yields at least 0.8dB better PSNR than the competing codecs Lee \etal  \cite{y2022_CVPR_lee} and Lu \etal \cite{y2021_ICIP_lu} do. Notice that CTC and these two codecs support FGS. Whereas these codecs do not use any context models, CTC exploits CRR and CDR and improves the RD curves significantly. On the other hand, Su \etal \cite{y2020_ICIP_su} supports a narrow range of bitrates only, while the other learning-based codecs in \cite{y2017_CVPR_toderici,y2018_CVPR_johnston,y2020_DOC_diao} provide even worse PSNR curves than JPEG2000 \cite{codec_jpeg2000}.

Next, Figure~\ref{fig:rdcurves_nonprogressive_kodak} compares CTC with non-progressive codecs: traditional codecs \cite{codec_jpeg2000,bpg,y2021_TCSVT_Bross}, learning-based fixed-rate codecs \cite{y2018_NIPS_minnen,y2020_CVPR_cheng,y2021_CVPR_he,y2022_ICLR_zhu} and variable-rate codecs \cite{y2021_CVPR_cui,y2021_CVPR_yang}. `Minnen w/o C' means the Minnen \etal's network without the context model \cite{y2018_NIPS_minnen}. Although CTC supports the additional functionality of FGS, it yields a comparable curve to these non-progressive codecs. Especially, around 0.6bpp, CTC provides competitive PSNRs to the existing codecs, including Cui~\etal~\cite{y2021_CVPR_cui} and VTM 12.0~\cite{y2021_TCSVT_Bross}, which are the state-of-the-art variable-rate codecs. Also, CTC outperforms `Minnen w/o C' \cite{y2018_NIPS_minnen} and BPG444 \cite{bpg} at almost every bitrate. More RD curves on other datasets are available in Appendix \Appenresults.

\begin{figure*}[h]
    \begin{center}
    \includegraphics[width=\linewidth,height=5.6cm]{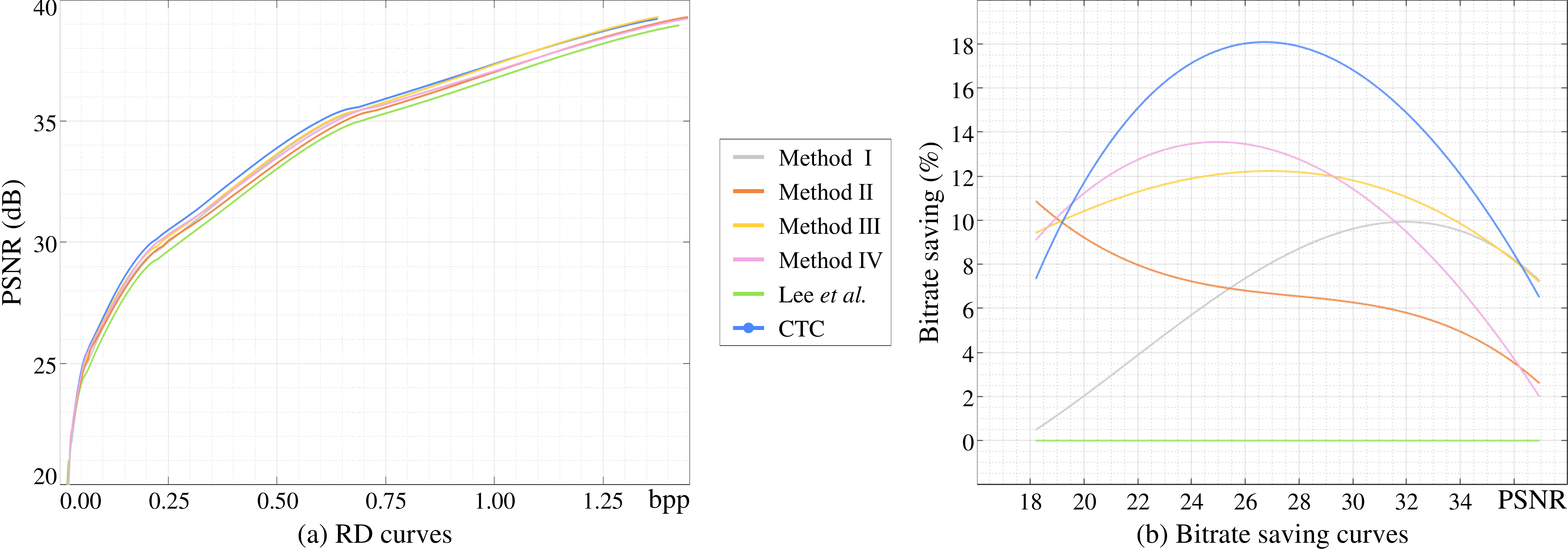}
    \end{center}
    \vspace*{-0.5cm}
    \caption
    {
        (a) RD curves of the four ablated methods in Table~\ref{table:ablation} and the baseline codec, Lee~\etal~\cite{y2022_CVPR_lee}, and (b) the correspoding bitrate saving curves with respect to the baseline.
    }
    \label{fig:ablation}
\end{figure*}

\vspace{0.1cm}
\noindent
\textbf{BD-rates:}
Table~\ref{table:bdrates} lists the BD-rates relative to Lee \etal\cite{y2022_CVPR_lee} on the three test datasets. Among the FGS codecs, the proposed CTC provides by far the best results on all datasets. For instance, on JPEG-AI, CTC achieves 17.00\% bitrate saving, while Lu \etal \cite{y2021_ICIP_lu} rather increases the required bitrates. Also, on CLIC, CTC is comparable to Cheng \etal \cite{y2020_CVPR_cheng} and better than VTM 12.0 \cite{y2021_TCSVT_Bross}.

\vspace{0.1cm}
\noindent
\textbf{Complexities: }
Table~\ref{table:times} compares the complexities of CTC with those of Minnen \etal \cite{y2018_NIPS_minnen} and Lee \etal \cite{y2022_CVPR_lee}. Minnen \etal is a fixed-rate codec using the autoregressive context model. The Lee \etal's codec supports FGS based on trit-plane coding, but it uses no context model. For Lee \etal and CTC, the times are measured for encoding and decoding an entire bitstream.

CTC is much faster than Minnen~\etal, since both CRR and CDR exploit contexts efficiently in parallel using common convolution layers, whereas Minnen \etal perform context-based prediction serially. Compared with Lee \etal, CRR and CDR demand about 12.7M more parameters but increase time complexities only marginally. In other words, CRR and CDR are not only effective for improving the RD performance but also efficient in terms of time complexity. Moreover, in Lee \etal, the postprocessing (PP) networks are optionally used to improve the reconstruction quality as shown in Figure~\ref{fig:rdcurves_nonprogressive_kodak}, but they increase the number of parameters by 50M. Without using such PP, CTC outperforms Lee \etal significantly.

\vspace{0.1cm}
\noindent
\textbf{Qualitative results}
Figure~\ref{fig:qualitative_results_withcodecs} compares reconstructed images obtained by existing codecs \cite{bpg,y2021_TCSVT_Bross,y2018_NIPS_minnen,y2022_CVPR_lee} and CTC. Near sharp edges or in textured regions, such as the window and wall patterns, flowers, and feathers, the traditional codecs \cite{bpg, y2021_TCSVT_Bross} yield blur artifacts. The reconstruction quality of the proposed CTC is better than that of Lee \etal \cite{y2022_CVPR_lee} and is comparable to that of the Minnen \etal's non-progressive codec \cite{y2018_NIPS_minnen}.

Figure~\ref{fig:qualitative_results_progressive} compares progressive reconstruction results, obtained by Lee~\etal~\cite{y2022_CVPR_lee} and CTC. At each column, both trit-plane coding algorithms reconstruct the images $\bXh_l$ up to the same significance level $l$. The proposed CTC yields higher RD performances by employing context models and decoder retraining. Consequently, CTC provides a better image quality than Lee \etal does.

\subsection{Ablation Study}
\label{ssec:ablation}

We conduct an ablation study to analyze the three contributions --- CRR, CDR, and decoder retraining --- of the proposed CTC algorithm as compared with the baseline trit-plane codec, Lee \etal \cite{y2022_CVPR_lee}. Table~\ref{table:ablation} lists the BD-rates of four ablated methods relative to the baseline on the Kodak dataset. CRR and CDR in methods \RNum{1} and \RNum{2} improve the RD performances, respectively, by reducing bitrates and improving image qualities. Both CRR and CDR achieve about $7\%$ of bitrate saving. When they are used together, the bitrate saving in method \RNum{3} is as big as $10.93\%$. Also, the decoder retraining with CDR provides a similar reduction of $10.81\%$, indicating that the retraining for partial latent tensors $\bYh_l$ is also essential in trit-plane coding. By combining the three components, the proposed CTC algorithm achieves a significant bitrate saving of $14.84\%$.

Figure~\ref{fig:ablation}(a) compares the RD curves of the ablated methods in Table~\ref{table:ablation}, and Figure~\ref{fig:ablation}(b) plots the bitrate saving percentages in terms of PSNR with respect to the baseline. We see that method \RNum{1} is more effective at a high PSNR range, since trit probabilities can be more accurately predicted using contexts when latent elements are finely reconstructed. On the other hand, method \RNum{2} performs better in a low PSNR range because quantization noise of coarsely reconstructed latent elements can be more easily reduced. The method \RNum{3} exhibits a relatively even bitrate saving in the entire PSNR range. Method \RNum{4} yields a bitrate saving curve skewed to low PSNRs. Finally, CTC reduces the bitrate requirement significantly, by more than $10\%$, when PSNR is between 20dB and 35dB. Therefore, the whole bitrate saving is $14.84\%$ as listed in Table \ref{table:ablation}.

\begin{table}
    \caption
        {
            Ablation study of CTC: for each ablated method, the BD-rate relative to the baseline, Lee \etal \cite{y2022_CVPR_lee}, is reported.
        }
    \vspace*{-0.2cm}
    \centering
    \footnotesize
    \begin{tabular}{@{} l | c c c | r@{} }
        \toprule
        & CRR & CDR & $g_s$ retraining & BD-rate \\
        \midrule
        Method \RNum{1} & \ding{51} & - & - & $-6.90\%$ \\
        Method \RNum{2} & - & \ding{51} & - & $-6.68\%$ \\
        Method \RNum{3} & \ding{51} & \ding{51} & - & $-10.93\%$ \\
        Method \RNum{4} & - & \ding{51} & \ding{51} & $-10.81\%$ \\
        \midrule
        CTC & \ding{51} & \ding{51} & \ding{51} & $\mathbf{-14.84\%}$ \\
        \bottomrule
    \end{tabular}
\label{table:ablation}
\end{table}


\section{Conclusions}
We proposed an effective trit-plane codec, called CTC, for progressive image compression using the two context modules: CRR and CDR. Before entropy encoding, CRR updates a probability tensor to compress trit-planes more compactly. After entropy decoding, CDR refines a  partial latent tensor to reconstruct a higher-quality image. Both CRR and CDR are based on convolutional layers, so they are efficient in terms of time complexity. Moreover, we developed a decoder retraining scheme, which, combined with CDR, achieves  better RD tradeoffs. It was shown that CTC outperforms conventional progressive codecs greatly.

\section*{Acknowledgments}
This work was supported by the National Research Foundation of Korea (NRF) grants funded by the Korea government (MSIT) (No.~NRF-2021R1A4A1031864 and No.~NRF-2022R1A2B5B03002310), and by IITP grant funded by the Korea government (MSIT) (No.~2021-0-02068, Artificial Intelligence Innovation Hub).

{\small
\bibliographystyle{ieee_fullname}
\bibliography{2023_CVPR_SMJEON}
}

\clearpage

\appendix

\section{Softmax and Entropy}
\label{app:proof}

\begin{theorem}
$H\left(\{p_0, p_1, p_2\}\right)$ is a monotonic decreasing function of $\beta$, where
\begin{equation}
p_i=\frac{e^{\beta y_i}}{\sum_{j=0}^{2}e^{\beta y_j}}, \quad \quad i=0, 1, 2,
\label{eq:softmax}
\end{equation}
and $\beta > 0$.
\end{theorem}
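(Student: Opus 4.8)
The plan is to prove the statement by showing directly that $dH/d\beta\le 0$, exploiting the fact that $\{p_i\}$ is a Gibbs (softmax) distribution, for which $H$ admits a clean closed form. First I would work with the natural logarithm, $H=-\sum_i p_i\ln p_i$; since changing the base of the logarithm only rescales $H$ by a positive constant, this does not affect monotonicity (so the statement covers $H$ measured in bits, as in \eqref{eq:ce_loss}). Writing $Z(\beta)=\sum_{j=0}^{2}e^{\beta y_j}$, we have $\ln p_i=\beta y_i-\ln Z$, and therefore
\[
H=-\sum_i p_i(\beta y_i-\ln Z)=\ln Z-\beta\,\mu(\beta),
\]
where $\mu(\beta)=\sum_i p_i y_i$ is the mean of $y$ under $\{p_i\}$.

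The second step is a short differentiation in $\beta$, which is legitimate since $Z$, $\mu$, and hence $H$ are smooth for $\beta>0$. Two elementary identities do all the work: $\frac{d}{d\beta}\ln Z=\frac{1}{Z}\sum_j y_j e^{\beta y_j}=\mu(\beta)$, and $\frac{dp_i}{d\beta}=p_i\bigl(y_i-\mu(\beta)\bigr)$. Plugging the first into the closed form gives $\frac{dH}{d\beta}=\mu-\mu-\beta\frac{d\mu}{d\beta}=-\beta\frac{d\mu}{d\beta}$, and the second gives $\frac{d\mu}{d\beta}=\sum_i y_i p_i(y_i-\mu)=\sum_i y_i^2 p_i-\mu^2=\operatorname{Var}_p(y)\ge 0$. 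Hence $\frac{dH}{d\beta}=-\beta\operatorname{Var}_p(y)\le 0$ for all $\beta>0$, which is the claim; it is strictly negative unless $y_0=y_1=y_2$, in which case $H\equiv\ln 3$ is constant.

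There is no genuine obstacle here — the argument is the classical thermodynamic identity $\partial S/\partial\beta=-\beta\,\mathrm{Var}(\text{energy})$ — so the only points requiring care are the base-of-logarithm remark and a clean statement of the two derivative identities. If one wished to avoid calculus, an alternative would be to observe that for $\beta_2>\beta_1>0$ the distribution $\{p_i(\beta_2)\}$ is itself a softmax tilt of $\{p_i(\beta_1)\}$ and invoke a Schur-concavity / majorization argument for entropy, but the one-line variance computation above is by far the most transparent and, as a bonus, explains \emph{why} the entropy decreases.
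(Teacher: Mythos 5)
Your proof is correct, and it takes a genuinely different route from the paper's. The paper substitutes $x=e^\beta$, differentiates the explicit three-term entropy expression with respect to $x$, and after algebraic cancellation rewrites $-\frac{\partial H}{\partial x}\cdot\frac{A^2}{\log x}$ as the manifestly nonnegative sum $x^{y_0+y_1-1}(y_0-y_1)^2+x^{y_1+y_2-1}(y_1-y_2)^2+x^{y_2+y_0-1}(y_2-y_0)^2$, concluding monotonicity in $x$ and hence in $\beta$. You instead use the Gibbs-distribution identity $H=\ln Z-\beta\mu(\beta)$ together with $\frac{d}{d\beta}\ln Z=\mu$ and $\frac{dp_i}{d\beta}=p_i(y_i-\mu)$ to obtain $\frac{dH}{d\beta}=-\beta\,\mathrm{Var}_p(y)\le 0$; your two derivative identities check out, and your base-of-logarithm remark correctly disposes of the bits-versus-nats issue (which the paper leaves implicit). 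What your argument buys: it is shorter, it generalizes verbatim to any number of categories rather than just three, and it exhibits the rate of decrease as $\beta$ times the variance, which explains the phenomenon and also gives the same strictness condition (equality only when $y_0=y_1=y_2$) that the paper reads off from its sum-of-squares form. What the paper's computation buys is that it stays entirely within elementary calculus and explicit algebra on the three-term expression, with no appeal to probabilistic identities, at the cost of a longer and less transparent manipulation.
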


\begin{proof}

Let $x = e^\beta$ and $A = x^{y_0}+x^{y_1}+x^{y_2}$. Then,
\begin{align}
-H\left(\{p_0, p_1, p_2\}\right) &= -H\left(\left\{\frac{x^{y_0}}{A}, \frac{x^{y_1}}{A}, \frac{x^{y_2}}{A}\right\}\right) \\
= \frac{x^{y_0}}{A}\log{\frac{x^{y_0}}{A}} & + \frac{x^{y_1}}{A}\log{\frac{x^{y_1}}{A}} + \frac{x^{y_2}}{A}\log{\frac{x^{y_2}}{A}}.
\label{eq:entropy}
\end{align}
The derivative of $-H$ with respect to $x$ is given by
\begin{align}
-\frac{\partial H}{\partial x}
&= \sum_{i=0}^{2}\left(1+\log\frac{x^{y_i}}{A}\right)  \frac{y_i x^{y_0 - 1} A - x^{y_i}A'}{A^2} \\
&= \sum_{i=0}^{2} {\log\frac{x^{y_i}}{A} \times \frac{y_i x^{y_0 - 1} A - x^{y_i}A'}{A^2}} \label{eq:simp1} \\
&= \sum_{i=0}^{2} {\left(\log{x^{y_i}} - \log{A}\right) \frac{y_i x^{y_0 - 1} A - x^{y_i}A'}{A^2}} \\
&= \sum_{i=0}^{2} {\log{x^{y_i}} \times \frac{y_i x^{y_0 - 1} A - x^{y_i}A'}{A^2}}
\label{eq:simp2}\\
&= \sum_{i=0}^{2} {\log{x} \times \frac{{y_i}\left(y_i x^{y_0 - 1} A - x^{y_i}A'\right)}{A^2}}
\label{eq:derivative_of_H}
\end{align}
where
\begin{equation}
A'=\frac{\partial A}{\partial x} = y_{0}x^{y_0 - 1}+y_{1}x^{y_1 - 1}+y_{2}x^{y_2 - 1}.
\label{eq:derivative_of_A}
\end{equation}
Note that
\begin{align}
({y_0}x^{y_0 - 1}  +{y_1}x^{y_1 - 1}&+{y_2}x^{y_2 - 1})A \nonumber \\
                  &= (x^{y_0}+x^{y_1}+x^{y_2})A'
\label{eq:AAprime}
\end{align}
and the equalities in \eqref{eq:simp1} and \eqref{eq:simp2} come from \eqref{eq:AAprime}.

Then, we have
\begin{align}
-\frac{\partial H}{\partial x}\frac{A^2}{\log x}
&= \sum_{i=0}^{2} {{y_i}^2 x^{{y_i}-1}A - {y_i}x^{y_i}A'} \\
&= x^{y_0 + y_1 - 1}\left(y_0 - y_1\right)^2 \\
&+ x^{y_1 + y_2 - 1}\left(y_1 - y_2\right)^2 \\
&+ x^{y_2 + y_0 - 1}\left(y_2 - y_0\right)^2 \\
\label{eq:geqzero}
&\geq 0.
\end{align}
\noindent
Thus, if $x>1$, then $\frac{\partial H}{\partial x}\leq 0$ and $H$ is a strictly monotonic decreasing function of $x$ unless $y_0=y_1=y_2$. Moreover, $x=e^\beta$ is a strictly monotonic increasing function of $\beta$, and $x>1$ if $\beta >0$. Therefore, $H$ is a strictly monotonic decreasing function of $\beta$, provided that $\beta > 0$.
\end{proof}

\section{Implementation and Training Details}
\label{app:imp}
In this section, we describe the implementation and training details of CTC. First, we describe the software for traditional codecs and the libraries for learning-based algorithms. Second, we present the implementation details of the proposed context models CRR and CDR. Then, we explain how to train the proposed CTC algorithm. Note that the implementation and acceleration details of DPICT\cite{y2022_CVPR_lee} are available in \cite{y2022_arXiv_jeon}.

\subsection{Software and Libraries}
\label{sapp:software}

We adopt the traditional codecs JPEG2000 \cite{codec_jpeg2000}, BPG444 \cite{bpg}, VTM 12.0 \cite{y2021_TCSVT_Bross} for comparison.

\vspace{0.2cm}
\noindent
\textbf{JPEG2000:} We use the open software in \cite{codec_jpeg2000}. We execute the following commands for encoding and decoding. We transform RGB-formatted images, such as png files, into raw files.

\vspace{0.3cm}
\noindent
\texttt{\{buildpath\}/opj\_compress -i \{inputfile\} -o \{bin\} -r \{15:150\}\\-F \{width\},\{height\},3,8,u@1x1:1x1:1x1}\vspace{0.3cm}
\\
\texttt{\{buildpath\}/opj\_decompress -i \{bin\} -o \{outputfile\}}

\vspace{0.3cm}
\noindent
\textbf{BPG444:} We use the software in \cite{bpg} and enter the following commands.
\vspace{0.3cm}

\noindent
\texttt{\{buildpath\}/bpgenc \{inputfile\} -o \{bin\} -q \{26:52\}\\-f 444 -e x265}\vspace{0.3cm}
\\
\texttt{\{buildpath\}/bpgdec -o \{outputfile\} \{bin\}}

\vspace{0.3cm}
\noindent
\textbf{VTM 12.0:} We execute the reference software package in \href{https://vcgit.hhi.fraunhofer.de/jvet/VVCSoftware_VTM/-/tree/VTM-12.0}{https://vcgit.hhi.fraunhofer.de/jvet/VVCSoftware\_VTM/-/tree/VTM-12.0} with the following commands.
\vspace{0.3cm}

\noindent
\texttt{\{buildpath\}/EncoderApp -i \{inputfile\} -c \{cfgpath\}/encoder\_intra\_vtm.cfg\\
-o /dev/null -b \{bin\} -wdt \{width\} -hgt \{height\} -fr 1 -f 1\\
-q {24, 26, 30, 31:43} --InputChromaFormat=444 --InputBitDepth=8\\
--ConformanceWindowMode=1 --InputColourSpaceConvert=RGBtoGBR\\
--SNRInternalColourSpace=1 --OutputInternalColourSpace=0
}\vspace{0.3cm}
\\
\texttt{\{buildpath\}/DecoderApp -b \{bin\} -o \{outputfile\} -d 8\\
--OutputColourSpaceConvert=GBRtoRGB}

\vspace{0.3cm}
We use \texttt{Pytorch}\cite{y2019_NIPS_paszke_pytorch} and \texttt{CompressAI}\cite{y2020_arXiv_begaint_compressai} libraries to implement the proposed CTC algorithm. Also, we employ the source codes and pretrained parameters in \texttt{CompressAI} for the Minnen \etal's algorithm \cite{y2018_NIPS_minnen}. For the other learning-based codecs, we use the results provided in the original papers.

\subsection{Implementation of CRR and CDR}
\label{sapp:context}
The main network of the proposed CTC algorithm is in Figure~\ref{fig:model_architecture}, and the detailed structures of the CRR and CDR modules are in Figure~\ref{fig:context_based_modules}. The context modules are incorporated into the main network as follows. There are three CRR models for different intervals of trit-plane levels $l$. We denote them as $\textrm{CRR}_L$, $\textrm{CRR}_{L-1}$, and $\textrm{CRR}_{\leq L-2}$, where the subscripts indicate the ranges of trit-plane levels in which the corresponding models are used. In other words,
\begin{equation}
\bPt_l =
\begin{cases}
\textrm{CRR}_L\left(\bYh_{l-1}, \bM, \bSig, \bE_l, \bP_l\right) & \text{if}\ \ l=L, \\
\textrm{CRR}_{L-1} \left(\bYh_{l-1}, \bM, \bSig, \bE_l, \bP_l\right) & \text{if}\ \ l=L - 1, \\
\textrm{CRR}_{\leq L-2} \left(\bYh_{l-1}, \bM, \bSig, \bE_l, \bP_l\right) & \text{if}\ \ l \le L - 2.
\end{cases}
\label{eq:crr_imp}
\end{equation}
Similarly, we implement three CDR models $\textrm{CDR}_{L-1}$, $\textrm{CDR}_{L-2}$, and $\textrm{CDR}_{\leq L-3}$ to obtain
\begin{equation}
\bYt_l =
\begin{cases}
\bYh_l & \text{if}\ \ L-1 < l \le L, \\
\textrm{CDR}_{L-1}\left(\bYh_{l}, \bM, \bSig\right) & \text{if}\ \ L-2 < l \le L - 1, \\
\textrm{CDR}_{L-2}\left(\bYh_{l}, \bM, \bSig\right) & \text{if}\ \ L-3 < l \le L - 2, \\
\textrm{CDR}_{\leq L-3} \left(\bYh_{l}, \bM, \bSig\right) & \text{if}\ \ l \le L - 3.
\end{cases}
\label{eq:cdr_imp}
\end{equation}

Whereas CRR estimates the probability tensor $\bPt_l$ for each trit-plane $\bT_l$ ( $l=1,\ldots, L$), CDR performs the prediction of $\bYt_l$ for any $l \leq L-1$. Therefore, $l$ is an integer in \eqref{eq:crr_imp} but a real number in \eqref{eq:cdr_imp}. The trit-plane levels $l \leq L-2$ are supported by a single CRR model, and the levels $l\leq L-3$ are by a single CDR model. These choices are made to strike a balance between the number of parameters and the RD performance. Also, CDR is not used at the top level $L-1 <  l \leq L$ because the refinement of a latent tensor $\bYh_l$ is not necessary at such a fine level.

Note that the proposed CDR is conceptually similar to LRP in \cite{y2020_ICIP_minnen}. However, there are clear differences between them. Whereas LRP predicts residual errors by taking only the mean and latent tensors as input, CDR exploits the standard deviation $\mathbf{\Sigma}$ as the additional context. In this way, CDR can refine partially reconstructed latent elements by exploiting their uncertainty levels, which are inversely proportional to the standard deviations. To demonstrate the importance of $\mathbf{\Sigma}$, we have implemented an ablated version of CDR without $\mathbf{\Sigma}$. It increases the BD-rate by
$+2.45\%$ on the Kodak lossless dataset. Moreover, while the quantization step size is $1$ in LRP, it is larger in the proposed algorithm (\ie $3^{L-l}$ when the first $l$ trit-planes are decoded). Thus, there are more opportunities for reducing quantization errors in the proposed algorithm. To achieve this goal, the proposed CDR exploits both $\mathbf{M}$ and $\mathbf{\Sigma}$.

\begin{table*}[t]
    \caption
        {
            The numbers of epochs for the context model training and the decoder retraining ($g_s$).
        }
    \centering
    \footnotesize
    \vspace*{-0.2cm}
    \begin{tabular}{c c c c c c c c@{} }
        \toprule
         $\textrm{CRR}_L$ & $\textrm{CRR}_{L-1}$ & $\textrm{CRR}_{\leq L-2}$ & $\textrm{CDR}_{L-1}$ & $\textrm{CDR}_{L-2}$ & $\textrm{CDR}_{\leq L-3}$ & $g_s$ \\
        \midrule
         300 & 300 & 10 & 30 & 30 & 30 & 100\\
        \bottomrule
    \end{tabular}
\label{table:context_models_epoch}
\end{table*}

\subsection{Training of CTC}
\label{ssec:training_details}
We train the main network for 300 epochs using the rate-distortion loss $\mathcal{L}=\mathcal{D}+\lambda \mathcal{R}$ with $\lambda=5$.

In the trit-plane slicing module in Figure~\ref{fig:model_architecture}, a latent tensor is sliced into $L$ trit-planes. Note that the maximum trit-plane level $L$ depends on the latent tensor, as described in \cite{y2022_arXiv_jeon}. However, $L=7$ for most images. The selection of CRR and CDR models in \eqref{eq:crr_imp} and \eqref{eq:cdr_imp} is dependent on $L$. Therefore, for stable training of these models, as well as the decoder retraining, we fix $L=7$ and use the training images with $L=7$ only.

We use the cross-entropy loss in \eqref{eq:ce_loss} to train the three CRR models. The CRR process is performed for every trit, except when the original probabilities are $(p_0, p_1, p_2)=(0, 1, 0)$. In such a case, the trit requires no bit, and there is no reason to update its probabilities.

The CDR loss in \eqref{eq:cdr_loss} can be rewritten as
\begin{equation}
\ell_{\textrm{CDR}}(l) = \| \bY-\bYt_l \|_F
\end{equation}
where $l$ denotes a trit-plane level. The first CDR model $\textrm{CDR}_{L-1}$ in \eqref{eq:cdr_imp} supports a partially reconstructed level $l \in (L-2, L-1]$. For its training, we use the sum of losses, given by
\begin{equation}
\ell_{\textrm{CDR}}(L-1) + \ell_{\textrm{CDR}}(\alpha) + \ell_{\textrm{CDR}}(L-2)
\label{eq:cdr_loss_specific}
\end{equation}
where $\alpha \sim \mathcal{U}(0,1)$ is a uniform random variable. The losses for the other two models  $\textrm{CDR}_{L-2}$ and $\textrm{CDR}_{\leq L-3}$ are similarly defined.

Then, the decoder is retrained to minimize the loss $\ell_{\textrm{DEC}}$ in \eqref{eq:loss_dec}. Note that the original decoder is optimized for the case when all trit-planes are received (\ie the highest level $l=L$). Thus, the decoder is retrained to consider lower levels as well. However, due to the retraining, the performances at high levels can be degraded. To alleviate the degradation, we set large weighting parameters at high levels, compared to low levels. Specifically, we define the loss as
\begin{align}
\ell_{\textrm{DEC}} = 100 \times & \textstyle  \sum_{l=L-1}^{L}  \| g_s(\bYt_{l}) - \mathbf{X} \|_F \\
                    + & \textstyle \sum_{l=L-4}^{L-2} \| g_s(\bYt_{l}) - \mathbf{X} \|_F.
\end{align}
Note that we consider five levels from $L-4$ to $L$, and set bigger weights at the two highest levels $L$ and $L-1$.

The training epochs for the context models and the decoder retraining are summarized in Table \ref{table:context_models_epoch}. These training schedules are determined by observing the convergence of the validation performance.

\section{More Experiments}
\label{app:exp}

\subsection{RD curves}
\label{ssec:rd_curves}
Figures \ref{fig:rdcurves_clic} and \ref{fig:rdcurves_jpegai} compare the RD curves on the CLIC validation dataset and the JPEG-AI testset, respectively. All learning-based algorithms, including the proposed CTC, are optimized to minimize the MS-SSIM loss in Figure~\ref{fig:rdcurves_clic}(b) and Figure~\ref{fig:rdcurves_jpegai}(b).

Figure \ref{fig:rdcurves_sorting_ablation} compares the proposed CTC with the trit-plane coding without RD priorities. More specifically, in `Without RD priorities,' the trits in each trit-plane are transmitted in the 3D raster scan order, instead of the decreaing order of their RD priorities \cite{y2022_CVPR_lee}. This alternative method performs badly compared to CTC. However, we see that its performance is also improved by employing the two context modules, CRR and CDR, and the decoder retraining scheme.

\subsection{Time complexity for high-resolution images}
\label{ssec:time_complexity}
We compare the time complexities for compressing 2K images in the CLIC validation dataset and the JPEG-AI testset in Table \ref{table:time_complexity_appendix}. The proposed CTC algorithm is based on trit-plane coding, which represents each latent element with about 7 trits. Hence, CTC increases the number of entropy-coded data by a factor of about 7, as compared with non-FGS codecs such as He \etal \cite{y2021_CVPR_he}. This is the main reason (and a price for enabling FGS) that CTC is slower than \cite{y2021_CVPR_he}. However, it can be observed from Table~\ref{table:time_complexity_appendix} that the proposed context modules, CRR and CDR, increase the complexities only moderately.

\begin{table}[h]
    \centering
    \footnotesize
    \caption
        {
            Time complexity comparison of CTC with Minnen \etal \cite{y2018_NIPS_minnen} and He \etal \cite{y2021_CVPR_he}.
        }
    \vspace*{-0.2cm}
    \begin{tabular}{@{} l | c c@{} }
        \toprule
        & Encoding (s) & Decoding (s) \\
        \midrule
        He \etal \cite{y2021_CVPR_he} & 1.00 & 0.91 \\
        Minnen \etal \cite{y2018_NIPS_minnen} & 25.85 & 78.49 \\
        \midrule
        CTC w/o context modules & 8.10 & 7.19 \\
        CTC & 8.70 & 8.26 \\
        \bottomrule
    \end{tabular}
    \label{table:time_complexity_appendix}
\end{table}

\subsection{Reconstructed images of CTC}
\label{ssec:qualitative}
Figures \ref{fig:qualitative_kodak_1}$\sim$\ref{fig:qualitative_jpegai_2} show various images reconstructed by the proposed CTC algorithm at levels $l=L$, $L-2$, $L-3$, and $L-4$. The images with resolutions larger than $512 \times 768$ are center-cropped.

\begin{figure*}[t]
    \begin{center}
    \includegraphics[width=\linewidth]{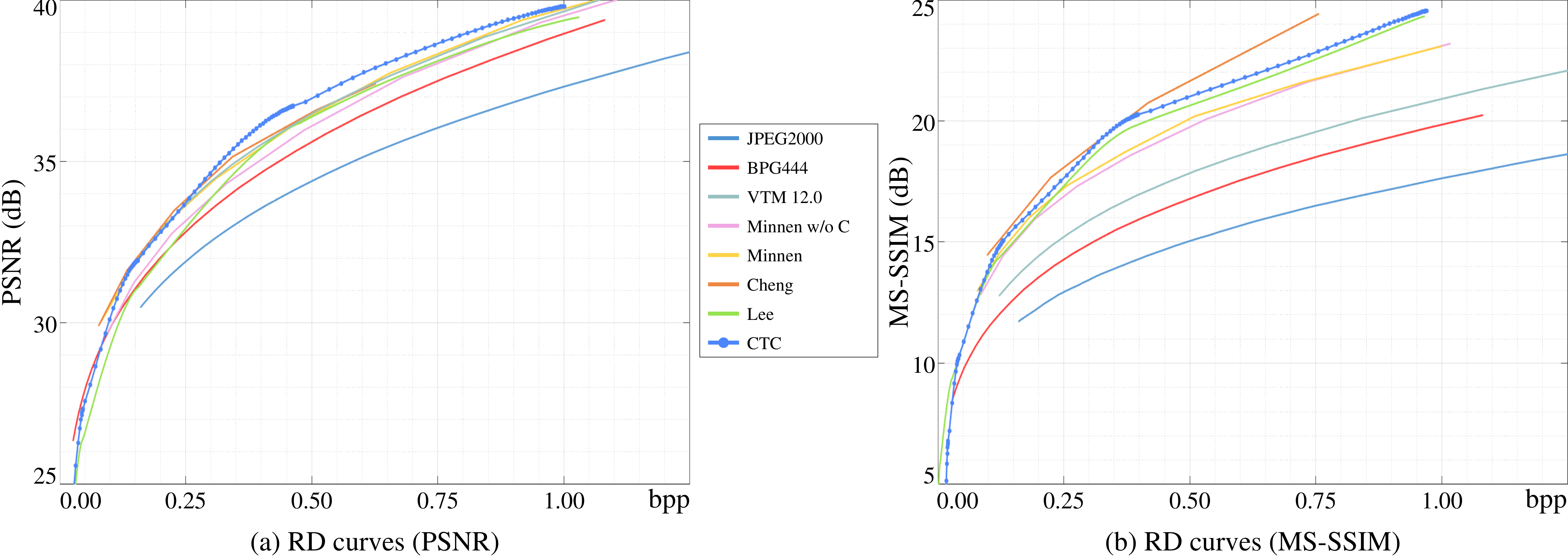}
    \end{center}
    \vspace*{-0.5cm}
    \caption
        {
            RD curve comparison on the CLIC validation dataset.
        }
    \label{fig:rdcurves_clic}
\end{figure*}

\begin{figure*}[t]
    \begin{center}
    \includegraphics[width=\linewidth]{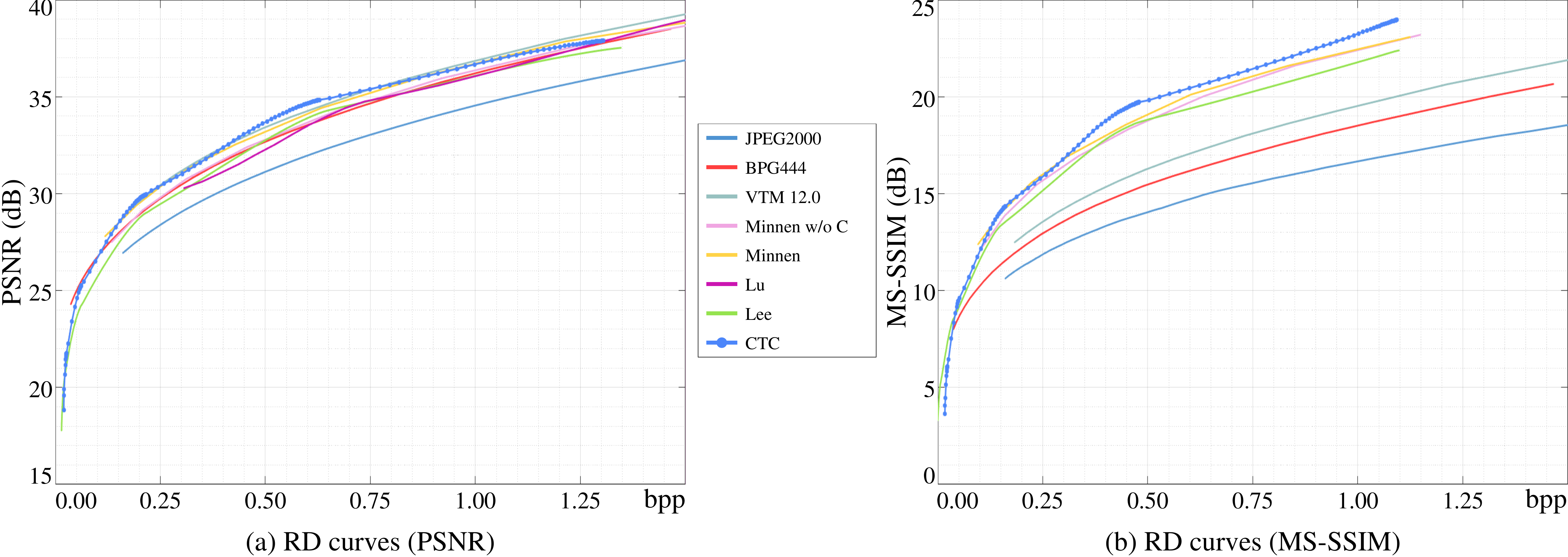}
    \end{center}
    \vspace*{-0.5cm}
    \caption
        {
            RD curve comparison on the JPEG-AI testset.
        }
    \label{fig:rdcurves_jpegai}
\end{figure*}

\begin{figure*}[t]
    \begin{center}
    \includegraphics[width=\linewidth]{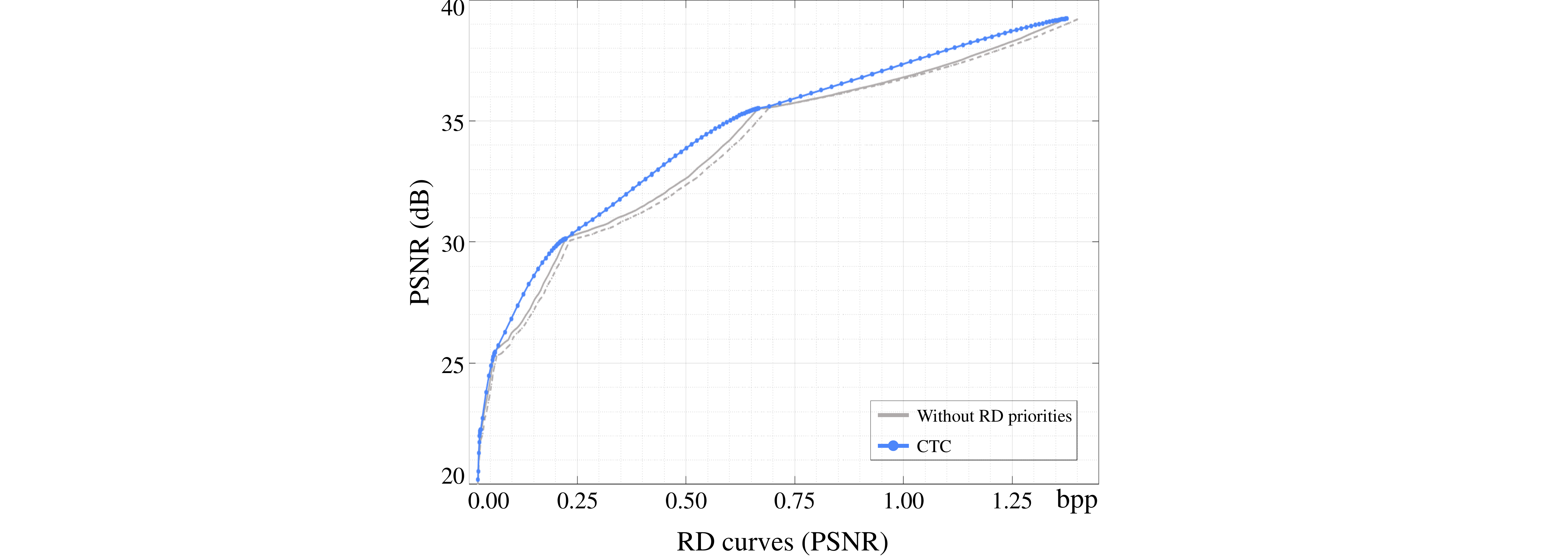}
    \end{center}
    \vspace*{-0.5cm}
    \caption
        {
            RD curve comparison of CTC and the alternative trit-plane coding method without RD priorities on the Kodak lossless dataset. The dashed curve means that the context modules, CRR and CDR, and the decoder retraining are not employed.
        }
    \label{fig:rdcurves_sorting_ablation}
\end{figure*}

\clearpage

\begin{figure*}[b]
    \begin{center}
    \includegraphics[width=\linewidth]{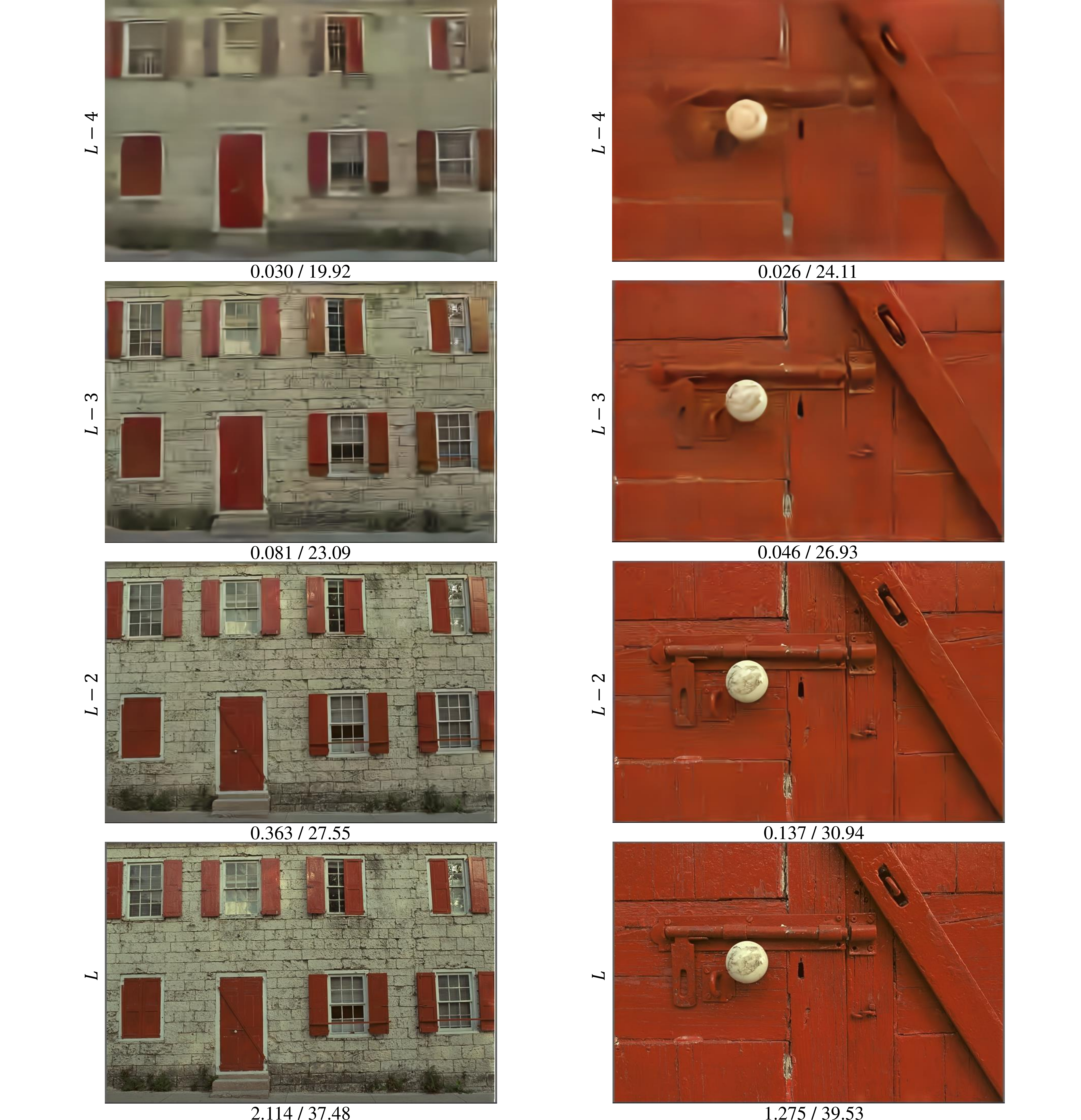}
    \end{center}
    \vspace*{-0.5cm}
    \caption
        {
            Reconstructed images of ``kodim01.png'' and ``kodim02.png.'' The bitrates (bpp) and PSNRs (dB) are reported below each image.
        }
    \label{fig:qualitative_kodak_1}
\end{figure*}

\begin{figure*}[b]
    \begin{center}
    \includegraphics[width=\linewidth]{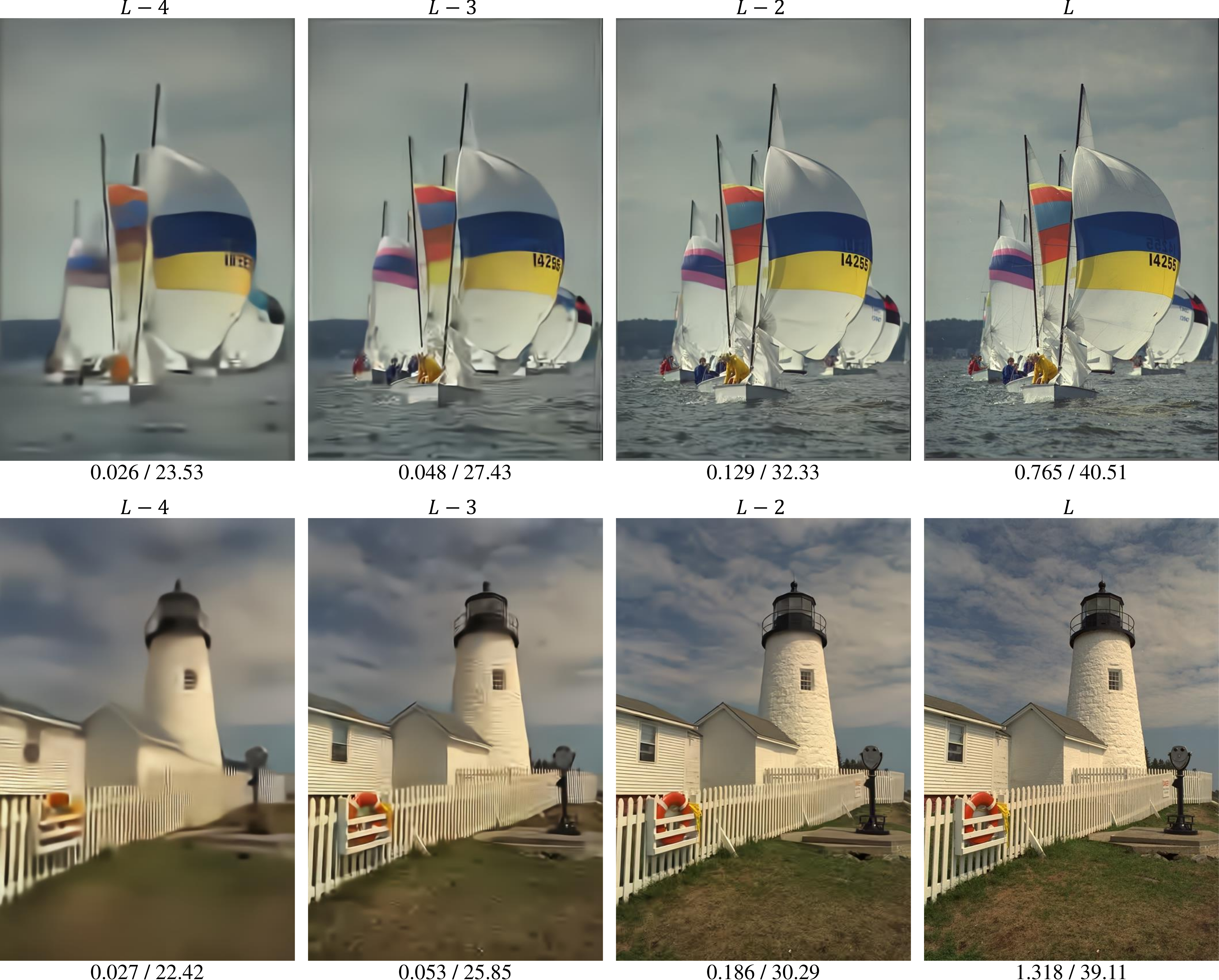}
    \end{center}
    \vspace*{-0.5cm}
    \caption
    {
        Reconstructed images of ``kodim09.png'' and ``kodim19.png.'' The bitrates (bpp) and PSNRs (dB) are reported below each image.
    }
    \label{fig:qualitative_kodak_2}
\end{figure*}

\begin{figure*}[b]
    \begin{center}
    \includegraphics[width=\linewidth]{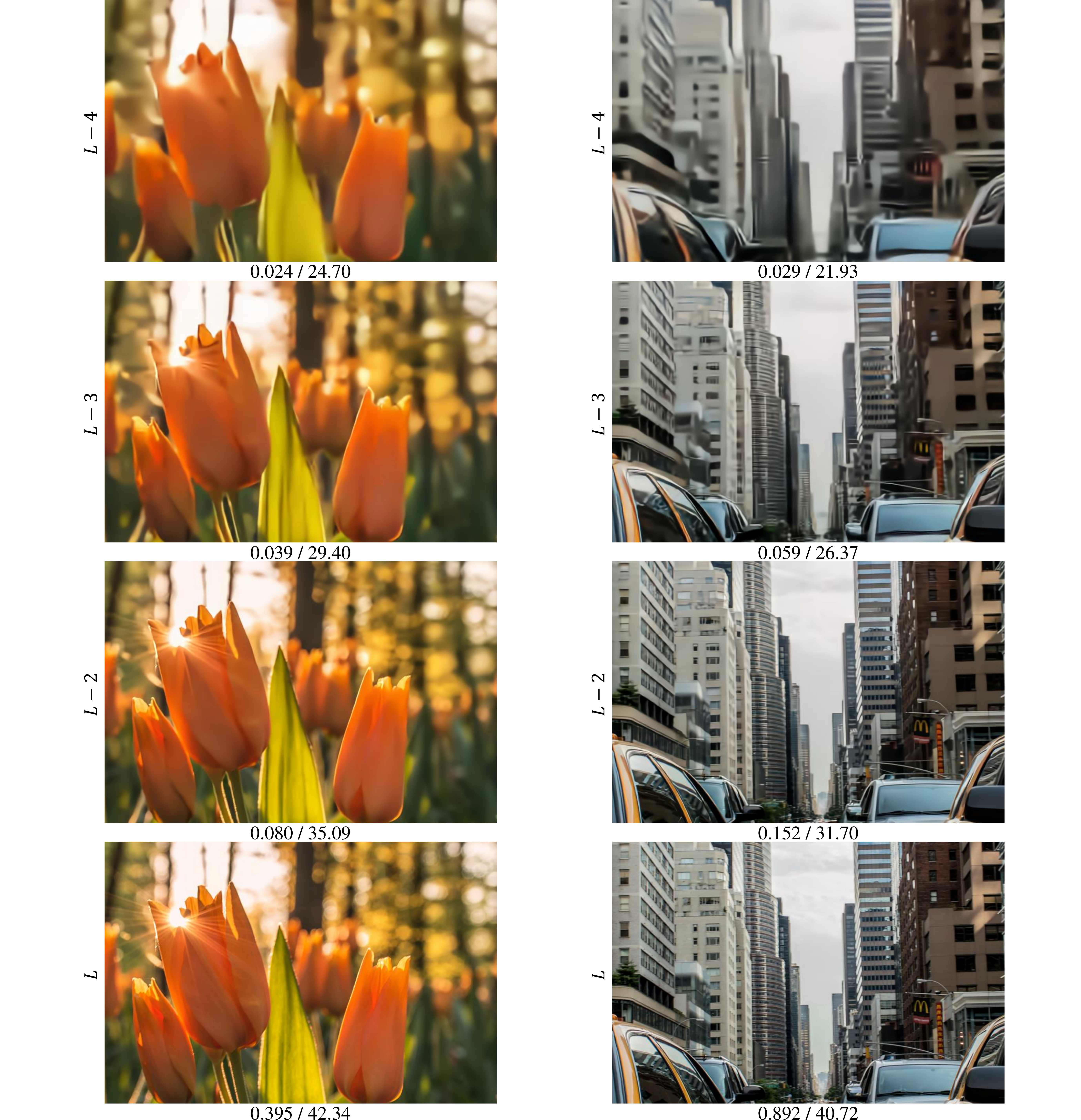}
    \end{center}
    \vspace*{-0.5cm}
    \caption
    {
        Reconstructed images of ``ales-krivec15949.png'' and ``andrew-ruiz-376.png.'' The bitrates (bpp) and PSNRs (dB) are reported below each image.
    }
    \label{fig:qualitative_clic_1}
\end{figure*}

\begin{figure*}[b]
    \begin{center}
    \includegraphics[width=\linewidth]{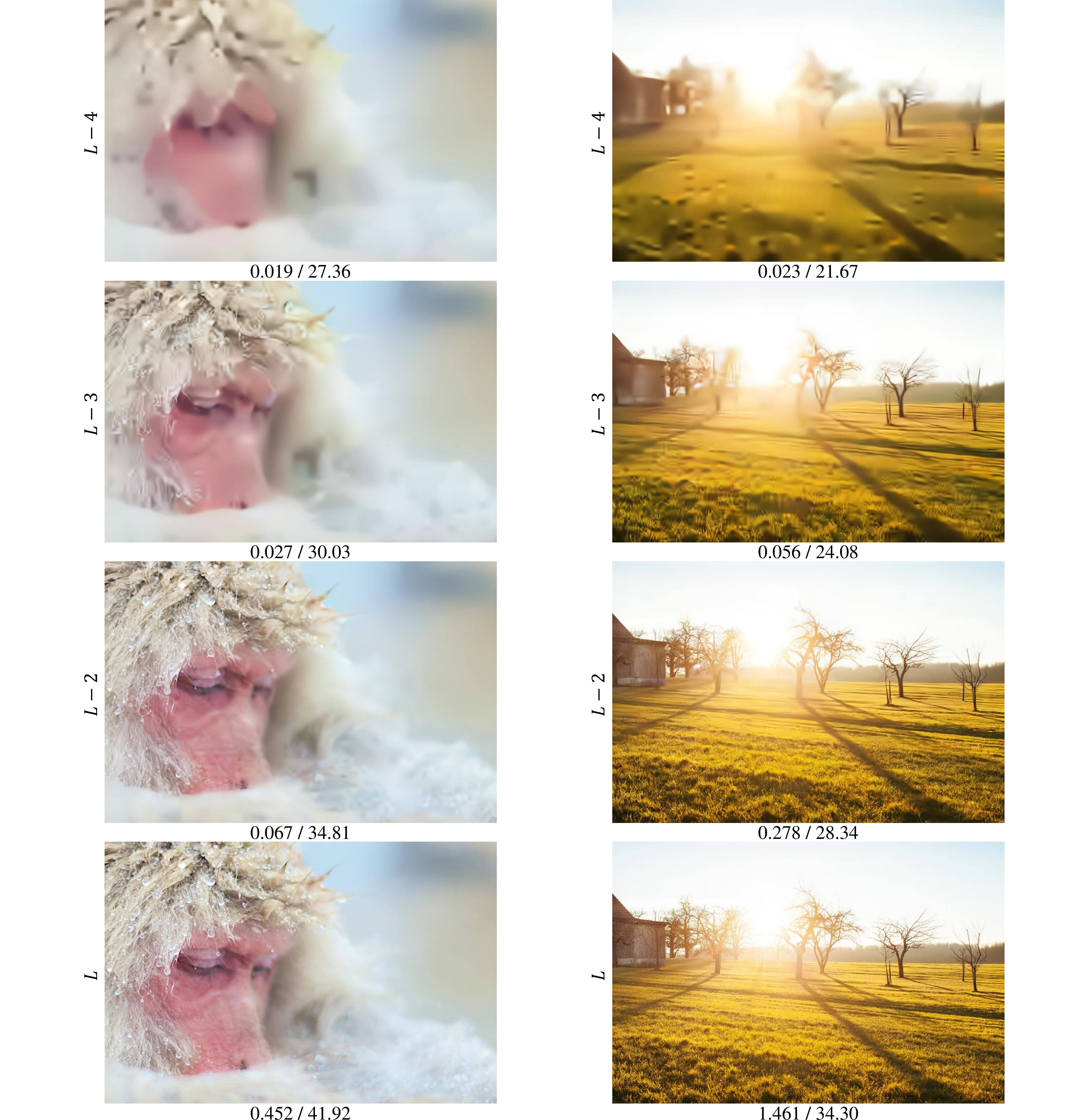}
    \end{center}
    \vspace*{-0.5cm}
    \caption
    {
        Reconstructed images of ``nomao-saeki-33553.png'' and ``philipp-reiner-207.png.'' The bitrates (bpp) and PSNRs (dB) are reported below each image.
    }
    \label{fig:qualitative_clic_2}
\end{figure*}

\begin{figure*}[b]
    \begin{center}
    \includegraphics[width=\linewidth]{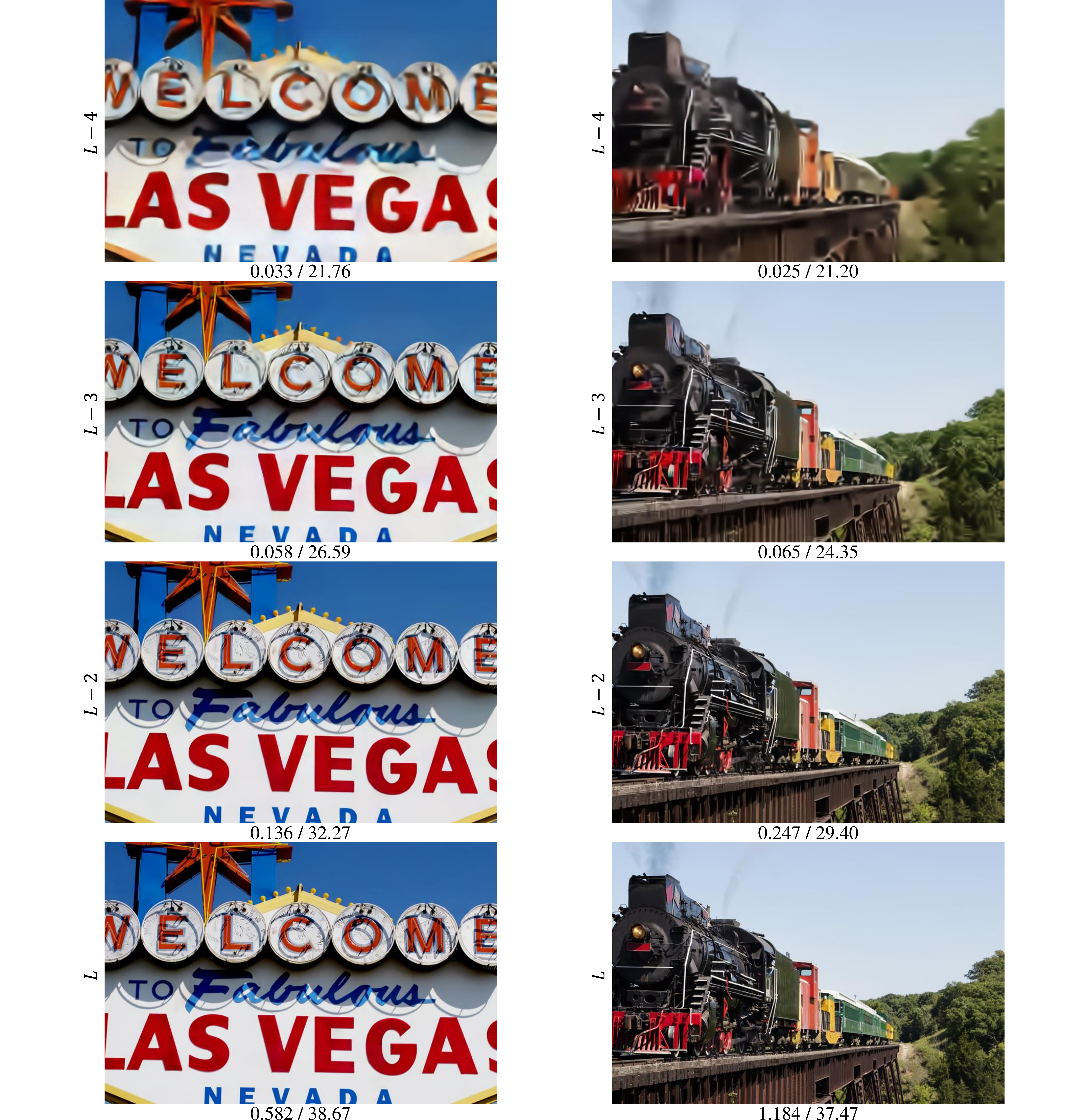}
    \end{center}
    \vspace*{-0.5cm}
    \caption
    {
        Reconstructed images of ``000505\_TE\_1336x872.png'' and ``000505\_TE\_1336x872.png.'' The bitrates (bpp) and PSNRs (dB) are reported below each image.
    }
    \label{fig:qualitative_jpegai_1}
\end{figure*}

\begin{figure*}[b]
    \begin{center}
    \includegraphics[width=\linewidth]{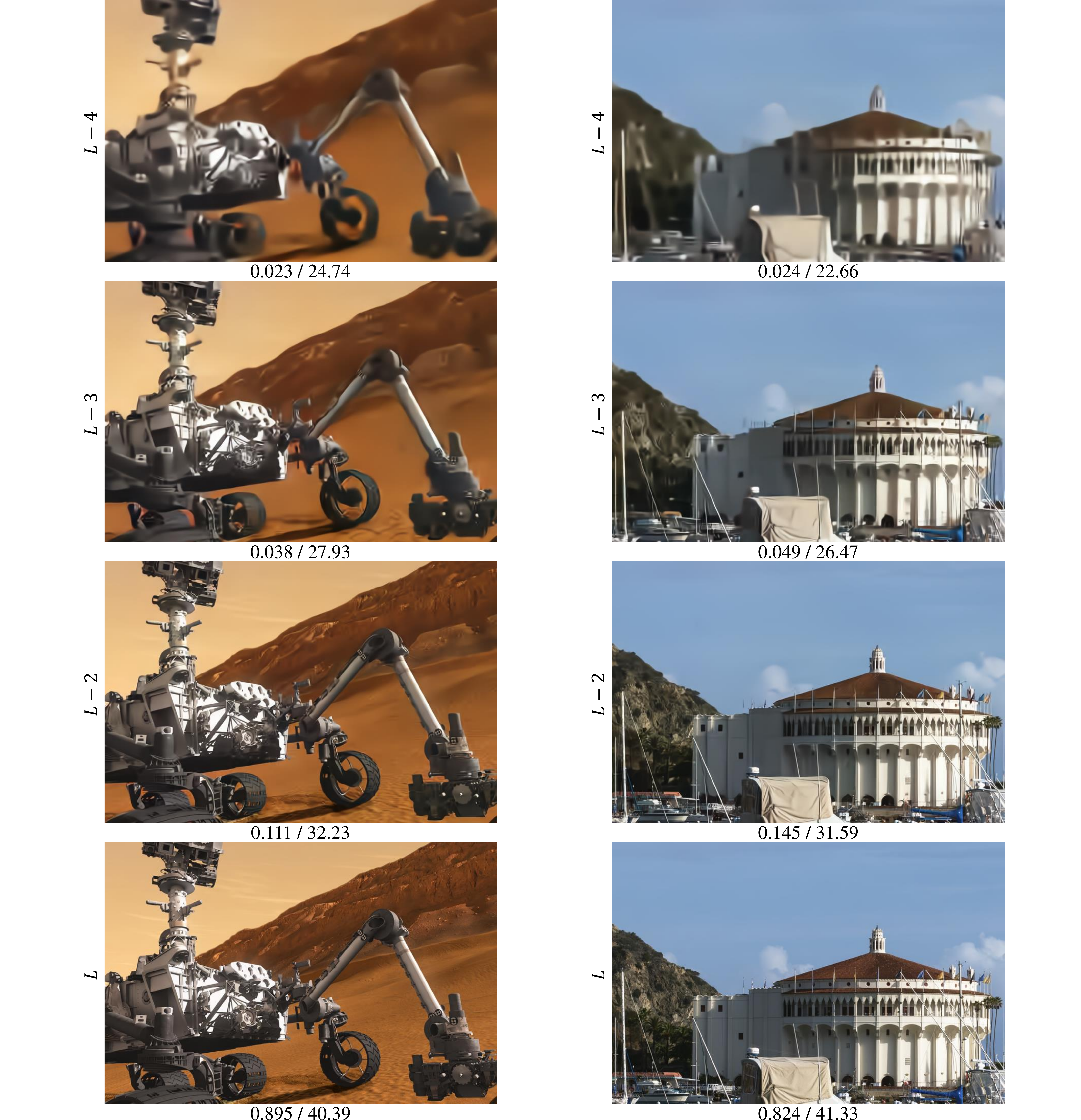}
    \end{center}
    \vspace*{-0.5cm}
    \caption
    {
        Reconstructed images of ``00010\_TE\_2000x1128.png'' and ``00015\_TE\_3680x2456.png.'' The bitrates (bpp) and PSNRs (dB) are reported below each image.
    }
    \label{fig:qualitative_jpegai_2}
\end{figure*}

\end{document}